\documentclass[11pt]{article}
\usepackage{graphicx}
\usepackage{amssymb}
\usepackage{amsmath}
\usepackage{amsthm}
\usepackage{multicol,multirow}
\usepackage{url}
\usepackage{hyperref}
\usepackage{color}
\usepackage{lscape}
\usepackage[margin=3.5cm]{geometry}
\usepackage{setspace}

\hypersetup{colorlinks,linkcolor={blue},citecolor={blue},urlcolor={red}}

\newtheorem{theorem}{Theorem}

\begin{document}

\title{Empirical Likelihood Inference for Sen and Sen-Shorrocks-Thon Indices}

\author{
Sreelakshmi N\\
\small Department of Statistics, Prajyotiniketan College, Kerala, India
\and
Saparya Suresh\\
\small Decision Sciences \& Operations Management, Indian Institute of Management, Kozhikode, India
\and
Sudheesh K. Kattumannil\\
\small Applied Statistics Unit, Indian Statistical Institute, Chennai, India
}

\date{}
\maketitle

\begin{abstract}
The Sen index and Sen-Shorrocks-Thon (SST) index are widely used measures of poverty indices. Developing reliable inference for these measures enables us to compare these measures in different populations of interest in an effective way. It is important to construct confidence intervals for the Sen index and SST index, which provide better coverage probability and shorter interval length.

Motivated by this, we discuss empirical likelihood (EL) and jackknife empirical likelihood (JEL) based inference for the Sen index. To derive a JEL-based confidence interval for the Sen and SST indices, we propose a new estimator for the Sen index using the theory of U-statistics and examine its properties.

The large sample properties of the EL and JEL ratio statistics are studied. We also discuss EL and JEL-based inference for the Sen-Shorrocks-Thon (SST) index. The finite sample performance of the EL and JEL-based confidence intervals of both Sen and SST indices is evaluated through a Monte Carlo simulation study.

Finally, we illustrate our methods using individual-level data from the Panel Study of Income Dynamics (PSID) survey from the US as well as Indian household-level income data for different states sourced from the Consumer Pyramids Household Survey (CPHS).
\end{abstract}

\noindent \textbf{Keywords:} Empirical likelihood; JEL; Sen index; Sen-Shorrocks-Thon index.
\section{Introduction}\label{sec1}
The measurement of poverty and inequality evaluates whether individuals in an economy lack the resources necessary to meet basic needs and quantifies disparities in welfare across the population. Accurate poverty measurement is essential for empirical welfare analysis and for assessing progress toward international development objectives. Over time, poverty measures have evolved from simple descriptive indicators to axiomatically grounded and distribution-sensitive indices.

Early contributions focused on poverty as the shortfall of income relative to a predetermined poverty line, leading to simple measures such as the headcount ratio and the poverty gap index. The headcount ratio captures the incidence of poverty, whereas the poverty gap index measures the average shortfall of poor individuals from the poverty line. Although intuitive and easy to compute, these measures neglect inequality among the poor and fail to satisfy desirable transfer-sensitivity properties.

A major advance was made by Watts (1968), who introduced a continuous poverty index based on the logarithm of income and interpreted poverty as a social welfare loss. This contribution initiated the axiomatic approach to poverty measurement, which was formalized by Sen (1976). Sen proposed that a satisfactory poverty measure should satisfy properties such as focus, monotonicity, and transfer sensitivity, and consequently introduced the Sen index, which simultaneously captures incidence, intensity, and inequality within a unified framework.

Subsequent work extended and refined this approach. Kakwani (1980) proposed a general class of poverty measures incorporating distributional sensitivity, while Clark, Hemming, and Ulph (1981) developed social welfare--based index classes emphasizing ethical consistency and decomposability. Foster, Greer, and Thorbecke (1984) introduced a decomposable class of poverty measures that became particularly influential due to its tractability and subgroup decomposition properties. Parallel efforts aimed to reconcile Sen’s conceptual framework with decomposability. Thon (1979, 1983) proposed early modifications, and Shorrocks (1995) later reformulated the Sen index into what is now known as the Sen--Shorrocks--Thon (SST) index. The SST index preserves Sen’s tripartite structure---incidence, intensity, and inequality---while achieving a multiplicative and decomposable form.

Among the large class of poverty measures, the Sen and SST indices remain particularly influential because of their axiomatic foundations and distributional sensitivity. Both indices combine the headcount ratio, the mean income gap, and a Gini-type component among the poor. Despite their conceptual appeal, their statistical properties are nontrivial. In particular, their sampling distributions are analytically complex, especially under finite samples or complex survey designs. The conventional approach relies on plug-in estimators combined with first-order asymptotic approximations for variance estimation. However, plug-in procedures may exhibit non-negligible bias in small or moderate samples, leading to unreliable inference.

Despite their extensive use in empirical welfare analysis, the Sen and SST indices lack a transparent representation within the classical U-statistic framework. Existing inference procedures obscure the structural source of variability and complicate higher-order refinement. In particular, the interaction between the headcount component, the income gap term, and the Gini-type inequality measure among the poor induces a nontrivial dependence structure that is not immediately amenable to standard variance derivations.

The present paper addresses this gap by establishing an explicit symmetric kernel representation of the Sen and SST indices, thereby embedding these measures within the unified theory of U-statistics. This formulation permits a direct application of Hoeffding decomposition, yields a tractable influence function, and produces a closed-form asymptotic variance expression under mild regularity conditions. Building on this structure, we further develop empirical likelihood (EL) and jackknife empirical likelihood (JEL) inference procedures and establish Wilks-type limiting distributions without requiring explicit variance estimation. To the best of our knowledge, such a unified U-statistic representation and likelihood-based inference framework for the Sen and SST indices have not previously been developed. The resulting methodology provides both theoretical clarification and practical inferential advantages for distribution-sensitive poverty measures.

The remainder of the paper is organized as follows. Sections 2 and 3 introduce the proposed U-statistic estimators for the Sen and SST indices, respectively, and derive their asymptotic properties. Section 4 presents a Monte Carlo study comparing the finite-sample performance of the proposed estimators with that of Davidson (2009). Section 5 provides empirical illustrations using data from the Panel Study of Income Dynamics (PSID) and the Consumer Pyramids Household Survey (CPHS). Section 6 concludes.
\section{Inference for the Sen index}
We obtain a new non-parametric estimator of Sen Index.  To understanding the formulation of Sen index, we need to understand three components of it - the headcount ratio, the income gap ratio and the Gini index. We  first review each component before defining the Sen index.

The headcount ratio $(H)$ is a measure of the proportion of people falling below the poverty line $z$. It is given by
\begin{equation*}
H = \frac{q}{N},
\end{equation*}
where $q$ represents the number of individuals living below the poverty line, $z$, and $N$ represents population size. While the incidence of poverty quantifies the number of individuals below the poverty line, it does not adequately measure the intensity of the issue.

Thus, we have the second kind of poverty measure, which looks at the distribution of impoverished individuals, and a popular measure in this category is the income gap ratio $(I)$, defined as
\begin{equation*}
I=1-\frac{\mu_z}{z},
\end{equation*}
where $\mu_z$ represents the mean income of the poor. Note that the income gap ratio captures the intensity of poverty in a population.

Takayama (1979) emphasized the significance of incorporating income inequality measures in poverty studies  by introducing the Gini index for the poor as a poverty metric. If $X$ is a continuous random variable with distribution function $F(.)$, the Gini index for the poor is defined as

\begin{equation*}\label{gini}
G_p=2   \mu_z^{-1}\int_{0}^{z} dzyF(y)dF(y) - 1.
\end{equation*}

From a policy-making perspective, it would be beneficial to establish an index that encompasses the three dimensions of incidence, intensity, and inequality.  Sen (1976) was the first to propose a poverty index, referred to as the Sen index, which is based on the measures of incidence, intensity, and inequality among the poor. The Sen index is defined as
\begin{equation}\label{sendef}
S=HI+\frac{q}{q+1}(1-I)G_p.
\end{equation}
The above equation \eqref{sendef} be rewritten as (Davidson (2009))
\begin{eqnarray}\label{senfinal}
\widehat{S}_1&=&\frac{2}{zF(z)}\int_0^{z}(z-x)(F(z)-F(x)dF(x).
\end{eqnarray}
Based on the above definition, we propose a new estimator of the Sen index in this section. We also develop an EL and JEL-based confidence interval for Sen index.
Let $F$ be the cumulative distribution function of the income under study. Suppose  $X_1,\ldots,X_n$ is random sample of size $n$ drawn from $F$.
A plug-in estimator of $S$ defined in equation \eqref{senfinal} is given by
\begin{eqnarray}
\label{senest}
\widehat{S}_1&=&\frac{2}{zF_n(z)}\int_0^{z}(z-x)(F_n(z)-F_n(x)dF_n(x)\nonumber\\&=&\frac{2}{nqz}\sum_{i=1}^{q}(z-X_{(i)})\left(q-i\right),
\end{eqnarray}
where $X_{(i)},\, i=1,\ldots,n$ are the $i$-th order statistics based on a random sample $X_1,\ldots,X_n$ from $F$, $q$ represents the number of individuals living below the poverty line $z$  and $F_n(.)$ represents the empirical distribution function.

Bishop et al. (1997) derived U-statistic estimators for each component (headcount, mean gap, and a Gini component) of the Sen index and studied large sample properties of the estimator. However, these estimators were complex and difficult to implement.
Davidson (2009) proposed  an estimator for $S$ which satisfy the population symmetry axiom and is given by
\begin{equation*}\label{sendav}
\widehat{S}_2=\frac{2}{nqz}\sum_{i=1}^{q}(z-X_{(i)})\left(q-i+\frac{1}{2}\right).
\end{equation*}
The estimator $\widehat{S}_2$ of $S$ is derived by duplicating the population, resulting in income $X_{(i)}$ appearing twice with rankings $2i-1$ and $2i$ accordingly. Davidson (2009) noted that deriving an expression for bias and standard error of the estimator for further inference is challenging.
To fill the gap in the literature, we present a simple estimator for the Sen index based on the theory of U-statistics and examine its properties, which are elaborated in the subsequent subsection. Additionally, we detail the constructions of EL and JEL-based confidence intervals for the Sen index.

\subsection{Non-parametric estimator of Sen index}

  Let $X_1$ and $X_2$ be independently distributed random variables from $F$. Next, we obtain a simple estimator for the Sen index. For this purpose, consider
\begin{eqnarray}\label{senrewrite}
E\big((z-X_1)I(X_1<X_2 \le z)\big)&=& \int_{0}^{\infty}\int_{0}^{\infty}(z-x)I(x<y<z)dF(y)dF(x)\nonumber\\
&=&\int_{0}^{z}(z-x)\big(\int_{x}^{z}dF(y)\big)dF(x)\nonumber\\
&=&\int_{0}^{z}(z-x)\big(F(z)-F(x)\big)dF(x).
\end{eqnarray}
Using equation (\ref{senrewrite}), the Sen index given in equation  (\ref{senfinal}) can be rewritten as
\begin{equation}\label{21}
S=\frac{2}{z}\frac{\Delta_1}{\Delta_2},
    \end{equation}where
\begin{equation*}
  \Delta_1=E\big((z-X_1)I(X_1<X_2 \le z)\big)
\end{equation*}
and
\begin{equation*}
  \Delta_2=P(X_1 \le z).
\end{equation*}
Given a random sample $X_1,X_2,\ldots, X_n$ of size $n$ from $F$. An estimator of $\Delta_1$ based on U-statistics is given by
\begin{equation*}
  U_{1}=\frac{1}{\binom{n}{2}}\sum_{i=1}^{n}\sum_{j=1,j<i}^{n}\psi_{1}(X_i,X_j),
\end{equation*}
where $\psi_{1}(X_1,X_2)$ is a symmetric kernel defined as
$$\psi_{1}(X_1,X_2)=\frac{(z-X_1)I(X_1<X_2 \le z)+(z-X_{2})I(X_2<X_1 \le z)}{2}.$$
We write the estimator of $\Delta_2$  as a U-statistic estimator with a kernel of degree 2 as
\begin{equation*}
  U_{2}=\frac{1}{\binom{n}{2}}\sum_{i=1}^{n}\sum_{j=1,j<i}^{n}\psi_{2}(X_i,X_j),
\end{equation*}
where
 $$\psi_{2}(X_1,X_2)=\frac{I(X_1 \le z)+I(X_2 \le z)}{2}.$$
 Thus, the estimator for $S$ is given by
 \begin{equation}\label{senu}
  \widehat{S}=\frac{2}{z}\frac{U_1}{U_2}.
 \end{equation}Next, we simplify the expression in (\ref{senu}) using order statistics. Note that $q$ is the number of  people below the poverty line $z$. Let $X_{(i)},\,i=1,\ldots,n$ be the $i$-th order statistics based on a random sample $X_1,X_2,\ldots, X_n$ of size $n$ from $F$.      After simplification, from  (\ref{senu}), we obtain 
 \textit{\begin{equation}\label{senustat}
 \widehat{S}=\frac{1}{(n-1)qz}\sum_{i=1}^{q}\big(z(q-1)-2(q-i)X_{(i)}\big).
 \end{equation}}

\par  Next, we study the asymptotic properties of the estimator $ \widehat{S}$.  In the following theorem, we prove the consistency of $\widehat{S}$.
 \begin{theorem}
   Given $z$, as $n\rightarrow \infty$, $\widehat{S}$ converges in probability to $S$.
 \end{theorem}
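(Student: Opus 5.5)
The plan is to use the ratio representation \eqref{senu}, $\widehat{S}=\frac{2}{z}\frac{U_1}{U_2}$, and prove consistency of numerator and denominator separately. Both $U_1$ and $U_2$ are U-statistics of degree two with symmetric kernels $\psi_1$ and $\psi_2$. By construction their expectations are $E\psi_1(X_1,X_2)=\Delta_1$ and $E\psi_2(X_1,X_2)=\Delta_2$. For $\psi_1$ this uses the symmetry of the two summands together with equation \eqref{senrewrite}.

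First I would check that the kernels are integrable, indeed bounded. For $0\le X_1<X_2\le z$ we have $0\le (z-X_1)\le z$, since incomes are nonnegative and the integrals in the paper start at $0$. Hence $0\le \psi_1\le z$ and $0\le\psi_2\le 1$, so $E|\psi_k|<\infty$ and even $E\psi_k^2<\infty$. Next I would invoke the strong law of large numbers for U-statistics (Hoeffding, 1961), which gives $U_1\to\Delta_1$ and $U_2\to\Delta_2$ almost surely, and hence in probability.

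If one prefers a self-contained argument, Chebyshev's inequality suffices: the standard variance formula for U-statistics gives
\[
\mathrm{Var}(U_k)=\frac{4(n-2)}{n(n-1)}\zeta_{1,k}+\frac{2}{n(n-1)}\zeta_{2,k}=O(n^{-1}),
\]
and this bound is finite because the kernels are bounded. For $U_2$ the result is immediate, since $U_2=F_n(z)$ exactly, so the classical weak law of large numbers applies directly.

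The final step is the continuous mapping theorem applied to $g(u,v)=\frac{2}{z}\frac{u}{v}$, which is continuous at $(\Delta_1,\Delta_2)$ provided $\Delta_2=F(z)>0$. This gives $\widehat{S}\to \frac{2}{z}\frac{\Delta_1}{\Delta_2}=S$ in probability by \eqref{21}. The only real obstacle is this denominator condition. I would state explicitly the implicit assumption $F(z)>0$, meaning the poverty line lies above the lower support point so that the Sen index is well defined. On the event $\{q=0\}$ the estimator is undefined, but $P(q=0)=(1-F(z))^n\to0$, so this event does not affect convergence in probability. Finally, I would remark that the order-statistic form \eqref{senustat} is an algebraic rewriting of \eqref{senu}, so the result transfers to it unchanged.
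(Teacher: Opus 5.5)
Your proposal is correct and follows the paper's route: establish consistency of the U-statistics $U_1$ and $U_2$ (Lehmann/Hoeffding for $U_1$, the law of large numbers for $U_2=F_n(z)$), then combine them through $\widehat S=\frac{2}{z}U_1/U_2$. Your use of the continuous mapping theorem at $(\Delta_1,\Delta_2)$ under the explicit condition $F(z)>0$ is slightly cleaner than the paper's step of writing $\widehat S/S$ as a product of ratios, which tacitly requires $S\neq 0$ (and whose displayed product is inverted, though it still tends to $1$).
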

\begin{proof}
    Since $U_1$ is a U-statistic, as $n\rightarrow\infty$, $U_1$ converges in probability to $\Delta_1$ (Lehmann, 1951) and by the law of large numbers, $U_2$ converges in probability to $\Delta_2.$  Since we can write
\begin{equation}
 \frac{\widehat{S}}{S}=\frac{U_2}{\Delta_2}\frac{\Delta_1}{U_1},
 \end{equation}the proof follows.
\end{proof}
\noindent The proofs of the following theorems are given in the Appendix.
\begin{theorem}Define $Q(u)=\inf\{x:F(x)\ge u\}$ and $p=F(z)$.  Let $Q(u)$ be continuous on $[0,p]$ and $ \int_0^{p} |Q(u)| < \infty$. For a given $z$, $\widehat{S}$ is an asymptotically unbiased estimator of the Sen index $S$.
\end{theorem}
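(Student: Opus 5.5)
We show $E(\widehat{S})\to S$ by working with the order-statistic representation (\ref{senustat}), conditioning on $q$ and then averaging over $q$. Write $\widehat{S}=\frac{q-1}{n-1}-\frac{2}{(n-1)qz}\sum_{i=1}^{q}(q-i)X_{(i)}$ on the event $\{q\ge 1\}$, and $\widehat S=0$ when $q=0$, an event of probability $(1-p)^n\to 0$. Here $q\sim \mathrm{Bin}(n,p)$. Given $q$, the $q$ observations below $z$ behave like an i.i.d. sample from the truncated law $F(x)/p$ on $[0,z]$. Its quantile function is $Q(pu)$ for $u\in(0,1]$, so the $i$-th smallest of them equals $Q(pU_{(i:q)})$ in distribution, where $U_{(i:q)}$ are uniform order statistics.

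\textbf{Step 1: the first term.} We have $E\big[\frac{q-1}{n-1}\big]=\frac{np-1}{n-1}\to p$, up to the negligible correction from $\{q=0\}$.

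\textbf{Step 2: the second term, conditionally on $q$.} Write
$$\frac{2}{(n-1)q}\sum_{i=1}^q (q-i)X_{(i)}=\frac{2q}{n-1}\cdot\frac1q\sum_{i=1}^q\Big(1-\frac iq\Big)X_{(i)}.$$
The inner average is an $L$-statistic with weight function $J(u)=1-u$ applied to the truncated sample. Its conditional expectation is $\sum_i (1-i/q)E\,Q(pU_{(i:q)})/q$. Using the beta representation $E\,Q(pU_{(i:q)})=\int_0^1 Q(pu)\,b_{i,q}(u)\,du$, this becomes $\int_0^1 Q(pu)\,J_q(u)\,du$, where $J_q(u)=\sum_i(1-i/q)b_{i,q}(u)/q$. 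Since $J_q$ is a Bernstein-type polynomial approximation, it converges to $1-u$ uniformly on $[0,1]$, with error $O(1/q)$. Because $Q$ is bounded by $z$ on $[0,p]$ and integrable, the conditional expectation therefore converges to $\int_0^1 Q(pu)(1-u)\,du=p^{-1}\int_0^p Q(v)(1-v/p)\,dv$ as $q\to\infty$. This limit equals $p^{-2}\int_0^z x(p-F(x))\,dF(x)$.

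\textbf{Step 3: uncondition.} Since $q/n\to p$ in probability and all quantities are bounded, with $|\widehat S|\le 1+2q/(n-1)\le 3$, bounded convergence lets us replace $q$ by $np$ in the limit. Hence
$$E\widehat{S}\to p-\frac{2p}{z}\cdot\frac{1}{p^2}\int_0^z x\big(p-F(x)\big)\,dF(x).$$

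\textbf{Step 4: identify the limit with $S$.} From (\ref{senfinal}),
$$S=\frac{2}{zp}\int_0^z (z-x)\big(p-F(x)\big)\,dF(x).$$
The term carrying the factor $z$ is $\frac{2}{p}\int_0^z(p-F)\,dF=\frac{2}{p}\cdot\frac{p^2}{2}=p$. The remaining term is $-\frac{2}{zp}\int_0^z x(p-F)\,dF$, which matches the limit in Step 3.

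The main obstacle is Step 2: making the convergence of $E[\frac1q\sum(1-i/q)X_{(i)}\mid q]$ rigorous. The continuity of $Q$ on $[0,p]$ is what permits the Riemann-sum and Bernstein approximation, and together with integrability it gives uniform control. An alternative route would use the exact identity $E[X_{(i)}\mid q]=q\binom{q-1}{i-1}\int_0^1 Q(pu)u^{i-1}(1-u)^{q-i}\,du$, sum the weights in closed form, and bound the remainder by $O(1/q)\int_0^p|Q|$. Either way, combining the conditional limit with the binomial fluctuations of $q$ is the step that requires care. One could also simply note that $\widehat S$ differs from the plug-in $\widehat S_1$ by $O(1/n)$ terms, but the quantile route gives the explicit limit directly.
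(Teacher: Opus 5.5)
Your proof is correct, but it takes a different route from the paper. The paper works with $\widehat S=(2/z)U_1/U_2$ directly. It Taylor-expands $u/v$ about $(\Delta_1,\Delta_2)$, uses that the linear terms have mean zero, and asserts $E(R_n)=O(n^{-1})$. That delta-method argument is generic and aims at an $O(n^{-1})$ bias rate. However, an $O_p(n^{-1})$ bound on $R_n$ does not by itself control $E(R_n)$. One also needs a uniform-integrability argument (for instance, boundedness of $U_1/U_2$) and a convention on $\{U_2=0\}$, which has probability $(1-p)^n$ and on which $\widehat S$ is undefined. The paper leaves both implicit, and it never uses the hypotheses on $Q$.

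You instead exploit the special structure $U_2=q/n$. Given $q$, the poor observations are i.i.d.\ from the truncated law, so you compute conditionally and finish by bounded convergence, handling $\{q=0\}$ explicitly. This is fully rigorous. In fact your Step 2 is exact rather than asymptotic: $q^{-1}b_{i,q}(u)$ is the $\mathrm{Bin}(q-1,u)$ probability of the value $i-1$, so $J_q(u)=\frac{q-1}{q}(1-u)$. Combining this with Step 4 gives, for $q\ge 1$,
\[
E(\widehat S\mid q)=\frac{q-1}{(n-1)p}\,S,
\qquad\text{hence}\qquad
E(\widehat S)=\frac{np-1+(1-p)^n}{(n-1)p}\,S ,
\]
which is an explicit $O(1/n)$ bias. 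So the step you single out as the main obstacle disappears, and the continuity and integrability assumptions on $Q$ are not needed at all.

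What your Step 4 genuinely uses is continuity of $F$, not continuity of $Q$. That is what gives $F(Q(v))=v$ and $\int_0^z(p-F)\,dF=p^2/2$, and it is the paper's standing assumption. Finally, $0\le\widehat S\le 1$ (with your convention on $\{q=0\}$), so asymptotic unbiasedness also follows in one line from the consistency in Theorem 1 and bounded convergence.
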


\begin{theorem}\label{thm_sen2}
Given $z$, as $n\to\infty$, $\sqrt{n}(\widehat S-S)$ converges in distribution to a Gaussian random variable with mean zero and variance $\sigma^2$, where
$$
\sigma^2
=
\frac{4}{z^2}
\left(
\frac{1}{F^2(z)}\sigma_1^2
+
\frac{\Delta_1^2}{F^4(z)}\sigma_2^2
-
\frac{2\Delta_1}{F^3(z)}\sigma_{12}
\right)
$$
with
\begin{align*}
\sigma_1^2
&=
Var\!\left(
X(F(X)-F(z))-\int_0^X y\,dF(y)
\right),\\[0.5em]
\sigma_2^2
&=
\frac14 F(z)\{1-F(z)\},\\[0.5em]
\sigma_{12}
&=
\frac12
\int_0^z
\left[
x\{F(x)-F(z)\}
-
\int_0^x y\,dF(y)
\right]
\,dF(x).
\end{align*}
\end{theorem}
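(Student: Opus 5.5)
The plan is to treat $(U_1,U_2)$ as a bivariate U-statistic with kernels of degree two, obtain its joint asymptotic normality from Hoeffding's theorem, and then pass to $\widehat S=\frac{2}{z}\,U_1/U_2$ by the delta method.

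\textbf{Step 1: projections.} For each kernel we compute the first-order projection $\psi_{k1}(x)=E\,\psi_k(x,X_2)$, $k=1,2$. For $\psi_1$, conditioning on $X_1=x$ gives
$$\psi_{11}(x)=\tfrac12\Big[(z-x)\{F(z)-F(x)\}I(x\le z)+I(x\le z)\int_0^x (z-y)\,dF(y)\Big].$$
For $\psi_2$ it gives $\psi_{21}(x)=\tfrac12\{I(x\le z)+F(z)\}$.

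\textbf{Step 2: joint normality.} By Hoeffding's theorem for U-statistics of degree two,
$$\sqrt n\,(U_1-\Delta_1,\;U_2-\Delta_2)\longrightarrow N_2(0,\Sigma),$$
with $\Sigma_{kl}=4\,\mathrm{Cov}\bigl(\psi_{k1}(X),\psi_{l1}(X)\bigr)$. This needs $E\psi_1^2<\infty$, which is automatic because $0\le\psi_1\le z$ (incomes are nonnegative). For the second kernel, $4\,Var(\psi_{21}(X))=F(z)\{1-F(z)\}$, and this is consistent with $U_2$ being essentially $F_n(z)$.

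\textbf{Step 3: delta method.} Apply it with $g(a,b)=\frac{2}{z}\,a/b$, whose gradient at $(\Delta_1,\Delta_2)$ is $\frac{2}{z}\bigl(1/F(z),\,-\Delta_1/F^2(z)\bigr)$. This gives
$$\sigma^2=\frac{4}{z^2}\Big(\frac{\Sigma_{11}}{F^2}+\frac{\Delta_1^2\,\Sigma_{22}}{F^4}-\frac{2\Delta_1\,\Sigma_{12}}{F^3}\Big),$$
which is exactly the displayed structure once $\sigma_1^2,\sigma_2^2,\sigma_{12}$ are identified with the appropriately scaled entries of $\Sigma$. Since $\Delta_2=F(z)>0$ is needed for the gradient to exist, this is implicitly assumed.

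\textbf{Main obstacle: matching the stated constants.} The remaining work is to show that the covariance entries coincide with the stated $\sigma_1^2,\sigma_2^2,\sigma_{12}$, and I expect this bookkeeping to be the hardest part.

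For $\sigma_1^2$, I would use integration by parts, $\int_0^x(z-y)\,dF(y)=zF(x)-\int_0^x y\,dF(y)$. After the $z$-terms cancel, this simplifies $2\psi_{11}(x)$ on $\{x\le z\}$ to
$$x\{F(x)-F(z)\}-\int_0^x y\,dF(y)+zF(z).$$
The constant $zF(z)$ does not affect the variance, which explains the form of $\sigma_1^2$. I must also check how the indicator $I(x\le z)$ interacts with the constant. If the paper's $\sigma_1^2$ suppresses the indicator, I would verify that this is immaterial or else state the variance with $I(X\le z)$ included.

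The factor $\tfrac14$ in $\sigma_2^2$ and $\tfrac12$ in $\sigma_{12}$ suggest the paper writes $\sigma_k^2=Var(\psi_{k1})$, absorbing the factor $4$ differently. I would track these factors carefully and reconcile them with the displayed formula.

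For $\sigma_{12}$, I would compute $\mathrm{Cov}\bigl(2\psi_{11}(X),\tfrac12 I(X\le z)\bigr)$ directly as an integral over $[0,z]$, subtracting the product of means. I expect this to reduce to $\tfrac12\int_0^z[\cdots]\,dF$, possibly up to a centering term that I would need to reconcile with the statement.
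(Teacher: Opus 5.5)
Your Steps 1--3 are correct and follow the same route as the paper: a joint Hoeffding CLT for $(U_1,U_2)$, then the delta method. Your projections $\psi_{11},\psi_{21}$ are also right. The gap is the step you postponed. It cannot be completed, because the displayed $\sigma_1^2,\sigma_2^2,\sigma_{12}$ are not the entries of $\Sigma$ under any common scaling, so the variance formula in the statement is false.

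Write $p=F(z)$, $A(x)=x\{F(x)-p\}-\int_0^x y\,dF(y)$ and $W=2\psi_{11}(X)=\{A(X)+zp\}I(X\le z)$, so that $E(W)=2\Delta_1$. Your delta-method display reproduces the stated one only if
\[
\sigma_1^2=\Sigma_{11}=\mathrm{Var}(W),\qquad \sigma_2^2=\Sigma_{22}=p(1-p),\qquad \sigma_{12}=\Sigma_{12}=\mathrm{Cov}\{W,I(X\le z)\}=2\Delta_1(1-p).
\]
None of these holds:
\begin{itemize}
\item The stated $\sigma_2^2=\tfrac14p(1-p)$ equals $\mathrm{Var}(\psi_{21})=\Sigma_{22}/4$. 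The stated $\sigma_1^2=\mathrm{Var}(A(X))$ is on the scale of $\Sigma_{11}=\mathrm{Var}(2\psi_{11})$. So no single rescaling fits both.
\item The stated $\sigma_{12}=\tfrac12\int_0^z A\,dF=\Delta_1-\tfrac12 zp^2$. This does not even vanish at $p=1$, where $U_2\equiv1$ and $\Sigma_{12}=0$.
\item Dropping the indicator is not immaterial. Your remark that the constant $zF(z)$ does not affect the variance fails for the same reason: the term is $zp\,I(X\le z)$, which is random when $p<1$.
\end{itemize}

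Two checks make this concrete. For a Pareto law with shape $\alpha\le2$, $A(x)\sim x(1-p)$, so the stated $\sigma_1^2$ is infinite. Yet $0\le\psi_1\le z$ forces a finite limiting variance. For $X\sim U(0,1)$ and $z=1$, one has $\widehat S=2U_1$ exactly, with true $\sigma^2=4\,\mathrm{Var}(A(X))=4/45$. The stated formula instead gives $4/45+4/9=8/15$.

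The paper's own proof reaches the displayed constants only through errors along this same route:
\begin{itemize}
\item It takes the projection of $\psi_1$ to be $A(X)$, losing the factor $\tfrac12$, the indicator and the $zp\,I(X\le z)$ term.
\item It asserts $E(A)=0$, which is false; $E(A)=-1/3$ in the uniform example.
\item It puts covariance $4(\cdot)$ in its joint CLT but keeps only $4/z^2$ in the final formula, whereas its own display yields $16/z^2$.
\end{itemize}

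So do not try to reconcile with the statement. Carry your Step 1 through and state the corrected limit, which is what your argument actually proves (for $p>0$):
\[
\sigma^2=\frac{4}{z^2p^2}\,\mathrm{Var}\Big(\big\{A(X)+zp-\Delta_1/p\big\}I(X\le z)\Big)=\frac{4}{z^2}\Big(\frac{\mathrm{Var}(W)}{p^2}-\frac{3\Delta_1^2(1-p)}{p^3}\Big).
\]
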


 We can use Theorem \ref{thm_sen2} to construct a normal approximation-based confidence interval for the Sen index using the newly proposed estimator. However, it is not easy to find a consistent estimator of the asymptotic variance.  This motivates us to develop EL and  JEL-based confidence interval for the Sen index. In this direction, first, we propose the empirical likelihood-based confidence interval for the Sen index in the following subsection, followed by JEL based confidence interval for the Sen index.
 \subsection{Empirical likelihood based confidence interval for the Sen index}
The empirical likelihood concept was initially introduced by Thomas and Grunkemier (1975) to ascertain the confidence interval for survival probabilities in the context of right censoring.   Owen's work (1988, 1990) extended the concept of empirical likelihood into a general technique. Empirical likelihood is determined by optimizing the non-parametric likelihood function under particular constraints.   Empirical likelihood provides a nonparametric framework that resembles likelihood methods.  Several researchers have investigated the application of empirical likelihood (EL) to enhance the robustness of inferences related to Lorenz curves, inequality indices, poverty gap differences, stochastic dominance tests, and other related analyses. Peng (2011), Qin (2013), Ratnasingam et al. (2024),  are various works which looks at application of EL in inequality studies.

Next, we propose an empirical likelihood (EL) based confidence interval for the Sen index.
Recall the estimator of the Sen index given in \eqref{senest}. For EL construction, we rewrite it as
\begin{equation}
\label{senest_el}
\widehat{S}_1 \;=\; \frac{2}{nqz} \sum_{i=1}^{n} (z - X_{(i)})(q-i)\, I\!\left(X_{(i)} \le z\right),
\end{equation}
where $q=n F(z)$.

Motivated by \eqref{senest_el}, we consider the estimating equation
\begin{equation}
\label{K_def}
K(X_i,S)
=
\left( 2(z - X_i)\big(F(z)-F(X_i)\big) - zS \right)
I(X_i \le z), \qquad i=1,\dots,n .
\end{equation}
It is easy to verify that $E\{K(X_i,S)\}=0$.

Let $X_1,\dots,X_n$ be a random sample from $F$. Let
$p=(p_1,\dots,p_n)'$ be a probability vector satisfying $p_i\ge 0$ and
$\sum_{i=1}^{n}p_i =1$.
Under these constraints, the product $\prod_{i=1}^{n} p_i$ attains its maximum value $n^{-n}$
at $p_i = 1/n$.
Therefore, the empirical likelihood ratio for $S$ is defined as
\begin{equation*}
L(S)
=
\sup_{\mathbf p}
\left\{
\prod_{i=1}^n (np_i)
:
\sum_{i=1}^n p_i = 1,\;
\sum_{i=1}^n p_i\, K(X_i,S)=0
\right\}.
\end{equation*}

Since $F$ is unknown, we replace it with the empirical distribution function $F_n$.
Define
\begin{equation*}
\widehat{K}(X_i,S)
=
\left( 2(z - X_i)\big(F_n(z)-F_n(X_i)\big) - zS \right)
I(X_i \le z), \qquad i=1,\dots,n.
\end{equation*}
Then the profile empirical likelihood ratio becomes
\begin{equation*}
\tilde{EL}(S)
=
\sup_{\mathbf p}
\left\{
\prod_{i=1}^n (np_i)
:
\sum_{i=1}^n p_i = 1,\;
\sum_{i=1}^n p_i\, \widehat{K}(X_i,S)=0
\right\}.
\end{equation*}

Using the method of Lagrange multipliers, the maximizer is
\begin{equation*}
p_i
=
\frac{1}{n}
\left(1 + \lambda_1 \widehat{K}(X_i,S)\right)^{-1},
\qquad i=1,\dots,n,
\end{equation*}
where the Lagrange multiplier $\lambda_1$ is the solution of
\begin{equation*}
\frac{1}{n}
\sum_{i=1}^{n}
\frac{ \widehat{K}(X_i,S) }
     { 1+\lambda_1 \widehat{K}(X_i,S) }
= 0.
\end{equation*}
Accordingly, the profile empirical likelihood ratio can be expressed as
\begin{equation*}
L(S)
=
\prod_{i=1}^{n}
\left(1 + \lambda_1\, \widehat{K}(X_i,S)\right)^{-1}.
\end{equation*}Under suitable regularity conditions, this representation facilitates the derivation of the asymptotic behavior of the empirical likelihood ratio, which is subsequently used to construct confidence intervals for the Sen index.\\
The limiting distribution of the empirical likelihood ratio is given in the following theorem.
\begin{theorem}
\label{thm:el}
Assume that $0 < F(z) < 1$ and $E[K(X,S)^2] < \infty$.
Then, as $n \to \infty$, $-2 \log L(S)$ converges in distribution to a
$\chi^2$ random variable with one degree of freedom.
\end{theorem}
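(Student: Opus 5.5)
We prove Theorem \ref{thm:el} by the usual Owen (1990) argument applied to the estimating function $\widehat K(X_i,S)$, treating the plug-in of $F_n$ as a negligible perturbation. Write $\widehat K_i=\widehat K(X_i,S)$ and $K_i=K(X_i,S)$. The plan has three steps.

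\begin{enumerate}
\item Control the sample moments of $\widehat K_i$:
\begin{align*}
\frac{1}{n}\sum_{i=1}^n \widehat K_i^2 &\to E[K(X,S)^2] \text{ in probability},\\
\max_{1\le i\le n}|\widehat K_i| &= o_p(n^{1/2}),\\
\frac{1}{\sqrt n}\sum_{i=1}^n \widehat K_i &= O_p(1).
\end{align*}
\item Show $\lambda_1=O_p(n^{-1/2})$ and expand $-2\log L(S)$.
\item Identify the limit as $\chi^2_1$.
\end{enumerate}

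For the first step, use the Glivenko--Cantelli bound $\sup_x|F_n(x)-F(x)|=O_p(n^{-1/2})$. The factor $(z-X_i)I(X_i\le z)$ lies in $[0,z]$, since incomes are nonnegative. Hence
$$\max_i|\widehat K_i-K_i|\le 4z\sup_x|F_n(x)-F(x)|=O_p(n^{-1/2}).$$
This gives the first two moment claims at once from the law of large numbers for $K_i^2$ and boundedness. For the third, the natural decomposition is
$$n^{-1/2}\sum_i\widehat K_i=n^{-1/2}\sum_i K_i+n^{-1/2}\sum_i(\widehat K_i-K_i).$$
The second term is a $V$-statistic-type remainder. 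Following the paper's reading of $K$ as the estimating function underlying the asymptotically normal plug-in $\widehat S_1$, we take $n^{-1/2}\sum_i\widehat K_i$ to be asymptotically $N(0,E[K^2])$, with $E[K^2]$ the relevant asymptotic variance. This identification is the key point.

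For the second step, use the Lagrange condition with Owen's standard bound. Writing $\lambda_1=|\lambda_1|\operatorname{sign}(\lambda_1)$ gives
$$\frac{|\lambda_1|\,\frac1n\sum_i\widehat K_i^2}{1+|\lambda_1|\max_i|\widehat K_i|}\le\Big|\frac1n\sum_i\widehat K_i\Big|=O_p(n^{-1/2}).$$
Together with the first step, this yields $\lambda_1=O_p(n^{-1/2})$. Expanding the constraint then gives
$$\lambda_1=\frac{\sum_i\widehat K_i}{\sum_i\widehat K_i^2}+o_p(n^{-1/2}).$$
A Taylor expansion of $\log(1+\lambda_1\widehat K_i)$, with the cubic remainder bounded by $|\lambda_1|^3\max_i|\widehat K_i|\sum_i\widehat K_i^2=o_p(1)$, gives
$$-2\log L(S)=\frac{\big(n^{-1/2}\sum_i\widehat K_i\big)^2}{n^{-1}\sum_i\widehat K_i^2}+o_p(1).$$

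For the third step, the numerator converges to $E[K^2]\chi^2_1$ by the first step, and the denominator converges to $E[K^2]>0$. Here $0<F(z)<1$ ensures the indicator is nondegenerate, so $E[K^2]>0$. Slutsky's theorem then gives the $\chi^2_1$ limit.

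The main obstacle is the central limit statement for $n^{-1/2}\sum_i\widehat K_i$. The plug-in $F_n$ turns the sum into a $V$-statistic of order two, and its Hoeffding projection must match $K$ for the asymptotic variance to equal $E[K^2]$. I would first verify this by computing the projection of
$$\frac{1}{n^2}\sum_{i,j}\Big(2(z-X_i)\big(I(X_j\le z)-I(X_j\le X_i)\big)-zS\Big)I(X_i\le z)$$
and checking that its variance agrees with $E[K^2]$ under the theorem's stated conditions. The remaining steps are routine.
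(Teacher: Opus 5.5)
There is a genuine gap, at the step you call the key point, and it cannot be closed: $n^{-1/2}\sum_i\widehat K_i$ is \emph{not} asymptotically $N(0,E[K^2])$ in general. Write $\widehat K_i=n^{-1}\sum_j h(X_i,X_j)$ with $h(x,y)=\{2(z-x)I(x<y\le z)-zS\}I(x\le z)$. The first-coordinate projection $E\,h(x,X)$ is $K(x,S)$. The second-coordinate projection $E\,h(X,y)$ is
\[
h_2(y)=2I(y\le z)\int_0^y (z-x)\,dF(x)-zSF(z),
\]
which has mean zero but is not identically zero. Hence
\[
\frac{1}{\sqrt n}\sum_{i=1}^n\widehat K_i=\frac{1}{\sqrt n}\sum_{i=1}^n\{K(X_i,S)+h_2(X_i)\}+o_p(1).
\]
The extra term is exactly $n^{-1/2}\sum_i(\widehat K_i-K_i)$, and your sup-norm bound only shows it is $O_p(1)$, not $o_p(1)$. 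Your Steps 1--2 are fine: the denominator does tend to $E[K^2]$. So what your argument actually gives is $-2\log L(S)\xrightarrow{d} r\,\chi^2_1$ with $r=\mathrm{Var}\{K(X,S)+h_2(X)\}/E[K(X,S)^2]$.

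Here is a concrete case. For $F$ uniform on $[0,2]$ and $z=1$, one gets $S=1/3$. Then $K+h_2=\tfrac12(1-X)^2$ on $\{X\le 1\}$ and $-\tfrac16$ otherwise. This gives $\mathrm{Var}(K+h_2)=7/180$, while $E[K^2]=2/45=8/180$, so $r=7/8$. This is the usual scaled-$\chi^2$ phenomenon for empirical likelihood with a plugged-in distribution function. Carrying out the projection check you planned would therefore refute the identification, not confirm it, and the $\chi^2_1$ conclusion for the plug-in ratio does not hold in general.

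The paper's proof does not resolve this either. It sets $g_i=K(X_i,S_0)$ with the true $F$ and runs Owen's i.i.d.\ expansion, $\lambda_1=\bar g_n/E(g_i^2)+o_p(n^{-1/2})$, followed by the CLT for $\bar g_n$. Your Steps 2--3 are the same computation. That argument is valid only for the infeasible ratio built with known $F$; the $F_n$ perturbation you correctly flagged is never addressed there.

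A correct calibration of the plug-in ratio needs an extra ingredient. One option is to multiply $-2\log L(S)$ by a consistent estimator of $1/r$, built from plug-in estimates of $h_2$ and of the two second moments. Another is to use jackknife pseudo-values, as in the JEL construction; their sample second moment estimates $\mathrm{Var}(K+h_2)$ rather than $E[K^2]$, so the $\chi^2_1$ limit is restored.
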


Using Theorem~\ref{thm:el}, a $(1-\alpha)$ level EL-based confidence interval for $S$ is given by
\[
\left\{
S \;:\; -2\log L(S) \le \chi^2_{1}(1-\alpha)
\right\},
\] where $\chi^2_{(1-\alpha)}(1)$ is the $(1-\alpha)-$th percentile of the chi-square distribution with one degree of freedom.

Since the new estimator of the Sen index is a U-statistic, direct empirical likelihood inference is computationally challenging as it involves non-linear constraints and may exhibit poor finite-sample performance. The jackknife empirical likelihood (JEL) approach addresses these difficulties by utilising jackknife pseudo-values, resulting in simpler constraints and more stable inference. Accordingly, we adopt a JEL-based inference procedure for the Sen index.
In the following subsection, we detail the JEL-based inference for the Sen index.

\subsection{JEL based inference for Sen index}
When the constraints in the empirical likelihood  become nonlinear, evaluating the likelihood function becomes computationally challenging. Further, it becomes increasingly difficult as $n$ grows large. To overcome this problem with EL, Jing et al. (2009) introduced the jackknife empirical likelihood method for finding the confidence interval of the desired parametric function. This method is highly popular among researchers because it combines the effectiveness of the likelihood approach with the jackknife technique. Motivated by this, we construct a JEL-based confidence interval for $S$.
We define the estimating equation for constructing JEL as
 \begin{equation}\label{senjack}
   S_{n}=\frac{1}{\binom{n}{2}}\sum_{i=1}^{n}\sum_{j=1,j<i}^{n}\psi(X_i,X_j;\widehat{S})=0,
 \end{equation}
 where
  \begin{equation*}
 \psi(X_1,X_2)=2\psi_{1}(X_1,X_2)-z\widehat{S}.\psi_{2}(X_1,X_2).
  \end{equation*}
  Now we define the jackknife pseudo values for $S_{n}$ as
  \begin{equation*}
    \widehat{V}_{k}=nS_{n}-(n-1)S_{n-1,k};\,\,k=1,2,...,n,
  \end{equation*}
  where $S_{n-1,k}$ is calculated from (\ref{senjack}) using $X_1,X_2,...,X_{k-1},X_{k+1},...,X_{n}$. The jackknife pseudo-values are constructed so that their sample average exactly recovers the original U-statistic estimator.
  Thus we have (proof is given in the Appendix)
  \begin{equation*}
    \frac{1}{n}\sum_{k=1}^{n}\widehat{V}_{k}=S_{n}.
  \end{equation*} This property allows the U-statistic to be expressed as an average of approximately independent quantities, which is essential for implementing the jackknife empirical likelihood framework.
  Define the JEL ratio  function for the Sen index as
  \begin{equation*}
 JEL(S)=\sup_{\bf p} \big(\prod_{i=1}^{n}{(np_i)};\,\, \sum_{i=1}^{n}{p_i}=1;\,\,\sum_{i=1}^{n}{p_i \widehat{V}_{k}}=0\big).
\end{equation*}
By Lagrange multiplier method, the jackknife empirical log likelihood ratio is given by
\begin{equation*}
  J(S)=2\sum_{k=1}^{n}\log\big(1+\lambda_{2}\widehat{V}_{k}\big),
\end{equation*}
where $\lambda_{2}$ is the solution of
\begin{equation}\label{lambda2}
  \frac{1}{n}\sum_{k=1}^{n}{\frac{\widehat{V}_{k}}{1+\lambda_{2} \widehat{V}_{k}}}=0,
\end{equation}
    provided
\begin{equation*}
  \min_{{1\le k\le n}}\widehat{V}_{k}<\widehat{S}_n<  \max_{1\le k\le n}\widehat{V}_{k}.
\end{equation*}
We prove Wilks' theorem for jackknife empirical log likelihood ratio statistic and the result can be used to construct JEL based confidence interval for $S$.
\begin{theorem} \label{thm:senjel}
 Let $g(x)=E\left(\psi(X_1,X_2;S)|X_1=x\right)$ and $\sigma_{g}^{2}=Var(g(X))$.
  Suppose that $$E\left(\psi(X_1,X_2;S)\right)<\infty$$  and $\sigma_{g}^{2}>0$. Then as $n\rightarrow \infty$, $J(S)$ converges in distribution to $\chi^2$ with one degree of freedom.
\end{theorem}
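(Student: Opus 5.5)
The plan is to follow the standard Jing, Yuan and Zhou (2009) argument for jackknife empirical likelihood of U-statistics, applied at the true value $S$. We work with the kernel $\psi(X_1,X_2;S)=2\psi_1(X_1,X_2)-zS\,\psi_2(X_1,X_2)$, which has mean $2\Delta_1-zS\Delta_2=0$ by \eqref{21}. The pseudo-values $\widehat V_k$ are built from the degree-two U-statistic $S_n$ with this kernel, so $\frac1n\sum_k\widehat V_k=S_n$. The proof then reduces to three facts about the pseudo-values:
\[
\sqrt{n}\,\bar V \to N(0,4\sigma_g^2),\qquad \frac1n\sum_{k=1}^n \widehat V_k^2 \to 4\sigma_g^2 \ \text{in probability},\qquad \max_{k}|\widehat V_k|=o_p(n^{1/2}).
\]
Here $\bar V=\frac1n\sum_k\widehat V_k$. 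Once these are in hand, the usual empirical likelihood expansion gives Wilks' theorem.

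\textbf{Step 1 (asymptotic normality).} Use the Hoeffding decomposition $S_n=\frac2n\sum_{i=1}^n g(X_i)+R_n$. Since $\psi$ is bounded, $E\psi^2<\infty$, and therefore $nE R_n^2=O(1/n)$. With $\sigma_g^2>0$, the CLT gives $\sqrt n S_n\to N(0,4\sigma_g^2)$.

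\textbf{Step 2 (variance consistency).} Write $\widehat V_k=2g(X_k)+\epsilon_k$ with $\frac1n\sum\epsilon_k^2=o_p(1)$. This follows from the explicit form of the pseudo-values of a degree-two U-statistic:
\[
\widehat V_k=\frac{2}{n-2}\sum_{j\ne k}\psi(X_k,X_j)-\frac{n}{n-2}S_n .
\]
The weak law for U-statistics applied to the sums $\frac1{n-1}\sum_{j\ne k}\psi(X_k,X_j)$, or equivalently Arvesen's (1969) jackknife variance consistency, yields $\frac1n\sum\widehat V_k^2\to4\sigma_g^2$. Boundedness of $\psi$ (because $0\le z-X_i\le z$ on the relevant event and the indicators are bounded) makes the maximal bound immediate: $\max_k|\widehat V_k|\le C$ a.s., hence $o_p(\sqrt n)$.

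\textbf{Step 3 (Wilks expansion).} From \eqref{lambda2}, the standard Owen argument gives
\[
|\lambda_2|\,\frac{\frac1n\sum\widehat V_k^2}{1+|\lambda_2|\max_k|\widehat V_k|}\le |\bar V|=O_p(n^{-1/2}),
\]
so $\lambda_2=O_p(n^{-1/2})$. Expanding further, $\lambda_2=\bar V/(\frac1n\sum\widehat V_k^2)+o_p(n^{-1/2})$. A Taylor expansion of $\log(1+x)$ then gives
\[
J(S)=\frac{n\bar V^2}{\frac1n\sum\widehat V_k^2}+o_p(1)\to\chi^2_1
\]
by Steps 1 and 2 and Slutsky.

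The main obstacle is a subtlety in the statement rather than in the expansion. The estimating equation \eqref{senjack} is written with $\widehat S$ inside the kernel, but Wilks' theorem must be evaluated at the hypothesized true $S$; with $\widehat S$ plugged in, $S_n$ would be identically zero. I would therefore read $J(S)$ as built from the kernel $\psi(\cdot,\cdot;S)$, which is linear in $S$. Under that reading no nuisance estimation enters, and the argument above is the classical U-statistic JEL proof. If instead one insisted on estimated quantities inside the kernel, an extra step would be needed to show that the perturbation contributes $o_p(n^{-1/2})$ to $\bar V$.
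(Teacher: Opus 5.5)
Your proposal is correct and follows the same route as the paper's proof, namely the Jing, Yuan and Zhou (2009) argument: a U-statistic CLT for $\bar V=S_n$, the Owen-type expansion $\lambda_2=\bar V/(\frac1n\sum_k\widehat V_k^2)+o_p(n^{-1/2})$, and then $J(S)=n\bar V^2/(\frac1n\sum_k\widehat V_k^2)+o_p(1)\to\chi^2_1$. The differences add rigor rather than change the route: you derive the moment condition and $\max_k|\widehat V_k|=o_p(\sqrt n)$ from the boundedness of $\psi$ for nonnegative incomes, whereas the paper assumes them; you justify $\frac1n\sum_k\widehat V_k^2\to 4\sigma_g^2$, which the paper only asserts and writes inconsistently as $\sigma_g^2$; and you correctly read the kernel at the true $S$ rather than at $\widehat S$, since plugging in $\widehat S$ gives $S_n\equiv 0$.
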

Using Theorem \ref{thm:senjel} , we can construct  a $100(1-\alpha) \%$ JEL based confidence interval for $S$ as $$\left\{S|J(S)\le \chi^2_{1-\alpha}(1)\right\}.$$
\section{Inference for SST index}

As discussed in the introduction, Sen index was not additively decomposable, and hence Shorrocks (1995) reformulated Sen’s index to satisfy stronger axioms and to admit a clean multiplicative decomposition: incidence $\times$ average gap $\times$ (1+Gini of gaps). The “modified Sen” is known as the SST index and is widely used in policy diagnostics because each component is interpretable.

 The SST index is defined as

\begin{equation}\label{sstdef}
  S_{h}= 2\int_{0}^{z}\left(1-\frac{x}{z}\right)(1-F(x))dF(x).
\end{equation}

Based on the definition in the above equation (\ref{sstdef}), we develop a non-parametric estimator of the SST index in this section, along with ways to construct JEL and EL based confidence intervals for the SST index that will contribute towards reliable inferences for the SST index.

Let $F$ be the cumulative distribution function of the income under study. Suppose  $X_1,\ldots,X_n$ is random sample of size $n$ drawn from $F$.
The plug-in estimator for the SST index in (\ref{sstdef}) is given by
\begin{equation*}\label{sstemp}
\widehat S_{h1}=\frac{2}{zn^2}\sum_{k=1}^{q}(n-i)(z-X_{i}).
\end{equation*}
Xu (1998) examined the asymptotic distribution of $\widehat S_{h1}$ using the result from Stigler (1974).
The above estimator does not satisfy the population symmetry axiom. Hence, taking this into account,  Davidson(2009) had proposed the following estimator for the SST index
 \begin{equation}\label{qw}
   \tilde{S}_{h_{1}}=\frac{2}{n^{2}z}\sum_{i=1}^{q}(n-i+0.5)(z-X_{(i)}).
 \end{equation}
However, these estimators are not unbiased, and  therefore, we propose an unbiased estimator for the SST index using the theory of U-statistics.

\subsection{Unbiased estimator of SST index}

We propose a new U-statistic based estimator for SST index, which we later use to carry out  the JEL-based inference for SST index. Let $X_1$ and $X_2$ be independent copies from $F$.  Note that $\min(X_1,X_2)$ has a distribution function given by $(1-F^2(x))$.  To find an unbiased estimator of the SST index, we rewrite  equation (\ref{sstdef}) as
\begin{eqnarray*}
S_{h}&=& \frac{2}{z}\int_{0}^{z}(z-x)(1-F(x))dF(x)\\
&=&\int_{0}^{z}2(1-F(x))dF(x)-\frac{1}{z}\int_{0}^{z}2x(1-F(x))dF(x)\\
&=&\int_{0}^{z}2(1-F(x))dF(x)-\frac{1}{z}\int_{0}^{\infty}2xI(x\le z)(1-F(x))dF(x)\\
&=&P\big(\min(X_1,X_2) \le z\big)-\frac{1}{z}E\big(\min(X_1,X_2)I\big(\min(X_1,X_2) \le z\big)\big).
\end{eqnarray*}
Let  $\psi_3(X_1,X_2))$ be a kernel of degree 2 given as
\begin{equation*}
\psi_3(X_1,X_2))=I\big(\min(X_1,X_2) \le z\big)-\frac{1}{z}\min(X_1,X_2)I\big(\min(X_1,X_2) \le z\big),
\end{equation*}
then $E(\psi_3(X_1,X_2))) = S_h$.  Thus, an unbiased  estimator of $S_{h}$ based on  U-statistics is given by
\begin{equation}\label{23}
\widehat{S}_h=\frac{1}{\binom{n}{2}}\sum_{i=1}^{n}\sum_{j=1,j<i}^{n}\psi_3(X_i,X_j).
\end{equation}

Next, we simplify the expression in (\ref{23}) using order statistics. Note that, for a fixed $k$ such $0 \leq k \leq n$, we can consider the set of ordered pairs $(i,j)$ such that $i<j$ and $\min(X_i,X_j) =X_{(k)}$. It can easily be seen that these are pairs with $X_{(k)}$ and any later observations  $X_{(k+1)},X_{(k+2}, \ldots, X_{(n)}$ and hence there will be exactly there will be $n-k$ ordered pairs with $X_{(k)}$ as minimum value. Thus, \eqref{23} can be rewritten as
\begin{equation*}
\widehat{S}_h=\frac{2}{n(n-1)}\sum_{i=1}^{n}{(n-i)I(X_{(i)} \le z)\big(\frac{z-X_{(i)}}{z}\big)}.
\end{equation*}
Since $I(X_{(i)}\le z)=0$ for $i=q+1,q+2,...,n$, the above equation further
simplified to
\begin{equation}\label{sstpract}
\widehat{S}_h=\frac{2}{n(n-1)z}\sum_{i=1}^{q}{(n-i)(z-X_{(i)})}.
\end{equation}Being a U-statistic, $\widehat{S}_h$ is a consistent estimator of  $S_{h}.$  That is, $\widehat{S}_h$ converges in probability to $S_{h}$, as $n\rightarrow\infty$ (Lehmann, 1951).

\noindent Next we find the asymptotic distribution of $\widehat{S}_h$ and the proof is given in Appendix.
\begin{theorem}
   Given $z$, as $n\rightarrow \infty$, $\sqrt{n}(\widehat{S}_h-S_h)$ converges in distribution to Gaussian with mean zero and variance $\sigma^2=4\sigma_2^2$ where $\sigma_2^2$ is given by
  \begin{eqnarray*}
  \sigma_2^2&=&Var\Big(\big(F(X)-\frac{1}{z}\int_{0}^{X}ydF(y)\big)I(X>z)\\&&\quad \quad +
  \big(1-\frac{1}{z}X+F(X)-\frac{1}{z}(X+\int_{0}^{X}ydF(y))\big)I(X\le z)\Big).
  \end{eqnarray*}
   \end{theorem}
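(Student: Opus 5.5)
The plan is to apply the classical Hoeffding central limit theorem for U-statistics (Hoeffding, 1948) directly to $\widehat{S}_h$ in (\ref{23}). That theorem needs only a kernel of degree 2 with finite second moment, and here this is immediate. On the event $\min(X_1,X_2)\le z$ we have $0\le \min(X_1,X_2)\le z$, so $0\le \psi_3(X_1,X_2)\le 1$. Hence $E\psi_3^2<\infty$, and there are no moment conditions on $F$ beyond income being nonnegative. Since $E\psi_3(X_1,X_2)=S_h$ was shown above, Hoeffding's theorem gives
\[
\sqrt{n}(\widehat{S}_h-S_h)\xrightarrow{d} N\bigl(0,\,4\,Var(h(X))\bigr), \qquad h(x)=E\bigl(\psi_3(x,X_2)\bigr),
\]
provided $Var(h(X))>0$. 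The whole proof therefore reduces to computing the first-order projection $h$ and identifying $Var(h(X))$ with $\sigma_2^2$.

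Next I compute $h(x)$ by splitting into the two cases $x>z$ and $x\le z$.

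\emph{Case $x>z$.} Here $\min(x,X_2)\le z$ holds exactly when $X_2\le z$, and then the minimum equals $X_2$. This gives
\[
h(x)=E\Bigl[\bigl(1-\tfrac{X_2}{z}\bigr)I(X_2\le z)\Bigr]=F(z)-\frac1z\int_0^z y\,dF(y),
\]
which is constant in $x$.

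\emph{Case $x\le z$.} Here the indicator is always $1$, and $\min(x,X_2)=X_2$ on $\{X_2<x\}$ and $x$ otherwise. This gives
\[
h(x)=1-\frac1z\Bigl(x\{1-F(x)\}+\int_0^x y\,dF(y)\Bigr).
\]

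Writing $h(X)=h(X)I(X>z)+h(X)I(X\le z)$ then yields the projection whose variance is $\sigma_2^2$. The final step is to check, term by term, that this expression rearranges into the displayed form of $\sigma_2^2$ in the statement. In doing so I would evaluate the upper limits of the integrals at $\min(X,z)$ and collect the $F(X)$ terms.

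To finish, I would verify $Var(h(X))>0$. This holds whenever $0<F(z)<1$ and $F$ is nondegenerate on $[0,z]$, because $h$ is strictly decreasing in $x$ on $[0,z]$ and constant on $(z,\infty)$.

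The main obstacle is the bookkeeping in the case $x\le z$: getting the split of $\min(x,X_2)$ right and reconciling the result with the printed form of $\sigma_2^2$. If the printed expression does not literally match, I would state the variance as $4\,Var(h(X))$ with $h$ as computed above, since that is what the Hoeffding theorem delivers.
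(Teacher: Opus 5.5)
Your plan is the paper's proof: Hoeffding's CLT for the degree-2 kernel $\psi_3$, with the same split into $x>z$ and $x\le z$, gives exactly the projection $h$ derived in the appendix. Your bound $0\le\psi_3\le 1$ and your check that $Var(h(X))>0$ supply details the paper leaves implicit. The term-by-term reconciliation with the displayed $\sigma_2^2$ will not succeed, however, even if you read the limits at $\min(X,z)$ on $\{X>z\}$. On $\{X\le z\}$ the printed expression has $F(X)-X/z$ where the correct $h$ has $XF(X)/z$, so the variances genuinely differ: for the uniform law on $(0,1)$ with $z=1/2$, your $h$ gives $\sigma_2^2=37/720$, while the printed formula as written gives $47/360$. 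Your fallback of stating the limit as $N(0,4\,Var(h(X)))$ with your $h$ is therefore the correct conclusion, and it is what the paper's own appendix computation actually establishes.
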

\subsection{Empirical likelihood based confidence interval for the SST index}

   Similar to the approach detailed in Section 2.2, given a random sample $X_1,X_2,...,X_n$ of size $n$ from $F(.)$,  the empirical likelihood function for $S_{h}$ can be written as
\begin{equation*}
  {EL}(S_{h})=\sup_{\bf p} \big(\prod_{i=1}^{n}{p_i};\sum_{i=1}^{n}{p_i}=1;\sum_{i=1}^{n}{p_{i} {M}(X_i,S_{h})}=0\big),
\end{equation*}
where ${M}(X_i,S_{h})=2(z-X)(1-F(X))-zS_{h}$.

Hence, the profile empirical likelihood ratio for $S_{h}$ can be written as
\begin{equation*}
  \widetilde{EL}(S_{h})=\sup_{\bf p} \big(\prod_{i=1}^{n}{(np_i)};\sum_{i=1}^{n}{p_i}=1;\sum_{i=1}^{n}{p_{i} \widehat{M}(X_i,S_{h})}=0\big),
\end{equation*}
where $\widehat{M}(X_i,S_{h})=2(z-X_i)(1-F_{n}(X_i))-zS_{h}$, $i=1,2,\ldots,n$.
The empirical log likelihood ratio can be  derived using the Lagrange multiplier method and is obtained as
   $$J_1(S_{h})=-2\log  \widetilde{EL}(S_{h})=2\sum_{i=1}^{n}\log\big[1+\lambda_{3} \widehat{M}(X_i,S_{h})\big],$$
where $\lambda_{3}$ is the solution to the equation
   $$\frac{1}{n}\sum_{i=1}^{n}{\frac{\widehat{M}(X_i,S_{h})}{1+\lambda_{3} \widehat{M}(X_i,S_{h})}}=0.$$
The following theorem gives the limiting distribution of $J(S_{h})$ and the proof is similar to that of Theorem \ref{thm:el} and hence omitted.
\begin{theorem}\label{thm_sstel}
 Assume  $E(X^2)<\infty$, then as $n\rightarrow \infty$, $J_1(S_{h})$ converges in distribution to a  $\chi^2$ random variable with one degree of freedom.
\end{theorem}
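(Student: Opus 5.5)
The plan follows the standard Owen-type argument, adapted to the fact that the estimating function $\widehat M(X_i,S_h)=2(z-X_i)(1-F_n(X_i))-zS_h$ contains the plug-in $F_n$. Write $M_i=M(X_i,S_h)$ and $\widehat M_i=\widehat M(X_i,S_h)$. The goal is to establish three facts:
\begin{align*}
\text{(a)}\quad & \frac{1}{\sqrt n}\sum_{i=1}^n \widehat M_i \to N(0,\sigma_M^2) \text{ in distribution for some } \sigma_M^2>0,\\
\text{(b)}\quad & \frac1n\sum_{i=1}^n \widehat M_i^2 \to \sigma_M^2 \text{ in probability},\\
\text{(c)}\quad & \max_{1\le i\le n}|\widehat M_i|=o_p(\sqrt n).
\end{align*}
Once these hold, the usual expansion goes through. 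From the defining equation for $\lambda_3$ one obtains $|\lambda_3|=O_p(n^{-1/2})$ and
$\lambda_3=\big(\sum_i\widehat M_i^2\big)^{-1}\sum_i\widehat M_i+o_p(n^{-1/2})$.
Taylor expanding $\log(1+x)$ then gives
$$J_1(S_h)=\frac{\big(n^{-1/2}\sum_i\widehat M_i\big)^2}{n^{-1}\sum_i\widehat M_i^2}+o_p(1).$$
By (a), (b) and Slutsky, this converges to $\chi^2_1$.

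Fact (c) is the easy part. The factor $1-F_n$ lies in $[0,1]$, so $|\widehat M_i|\le 2(z+X_i)+zS_h$. Since $E(X^2)<\infty$, $\max_i X_i=o_p(\sqrt n)$. The same bound, combined with $\sup|F_n-F|\to0$ almost surely (Glivenko--Cantelli), gives
$$\frac1n\sum_i\widehat M_i^2-\frac1n\sum_i M_i^2 \to 0 \text{ in probability},$$
and the law of large numbers then yields (b) with $\sigma_M^2=E M^2$. Here I would point out that on $X>z$ the term $(z-X)$ is negative. Read literally, $M$ does not have mean zero unless we interpret it with the indicator $I(X\le z)$, as in the Sen case. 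Taking $M=\{2(z-X)(1-F(X))-zS_h\}$ with $(z-X)$ replaced by $(z-X)I(X\le z)$, i.e.\ consistent with \eqref{sstdef}, we get $E M=0$ because
$$E\,(z-X)I(X\le z)(1-F(X))=\frac{z S_h}{2}.$$
I would state this convention explicitly.

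The main obstacle is (a), because the $\widehat M_i$ are not i.i.d. I would use the identity
$$\frac1n\sum_i\widehat M_i=\frac{2}{n^2}\sum_i\sum_j (z-X_i)I(X_i\le z)I(X_j>X_i)-zS_h,$$
which is a V-statistic. It differs from the U-statistic $z\widehat S_h-zS_h$ of \eqref{23} only by $O_p(1/n)$; the difference is essentially $z\widehat S_h/n$. Hence
$$\frac{1}{\sqrt n}\sum_i\widehat M_i=\sqrt n\,z(\widehat S_h-S_h)+o_p(1).$$
By the preceding theorem on the asymptotic normality of $\widehat S_h$, this is asymptotically normal with variance $4z^2\sigma_2^2$. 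The remaining check is that this Hoeffding-projection variance $4z^2\sigma_2^2$ agrees with the limit $\sigma_M^2=E M^2$ appearing in (b). If it does not, the statistic converges instead to a scaled $\chi^2_1$, with scale $4z^2\sigma_2^2/E M^2$.

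To address this, I would compute the projection $h_1(x)=E[\psi_3(x,X_2)]$ explicitly and compare $Var(2z h_1(X))$ with $E M^2$. The projection contains the extra term $\int_0^{x}(z-y)I(y\le z)\,dF(y)$ coming from the second argument, and I would try to show that its contribution is absorbed in the same way as in the proof of Theorem~\ref{thm:el}. Following the paper's statement that the proof mirrors Theorem~\ref{thm:el}, I would present it along those lines, noting that if needed the variance in the denominator is to be taken as the projection variance $4z^2\sigma_2^2$ to secure the exact $\chi^2_1$ limit.
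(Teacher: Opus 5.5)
The step you leave open is the gap, and it cannot be closed. The projection variance $4z^2\sigma_2^2=\mathrm{Var}\{2zh_1(X)\}$ is in general \emph{not} equal to $EM^2$. So $J_1(S_h)$ as defined, with $F_n$ inside $\widehat M$, does not converge to $\chi^2_1$; your (a) and (b) hold with different constants.

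Everything you derive before that point is correct:
\begin{itemize}
\item for continuous $F$, exactly $n^{-1}\sum_i\widehat M_i=\tfrac{n-1}{n}z\widehat S_h-zS_h$;
\item $n^{-1}\sum_i\widehat M_i^2\to EM^2$, since $|\widehat M_i-M_i|\le 2z\sup_x|F_n(x)-F(x)|$.
\end{itemize}
Hence $J_1(S_h)\to c\,\chi^2_1$ with
\[
c=\frac{\mathrm{Var}\bigl\{2(z-X)^+(1-F(X))+2\int_0^{\min(X,z)}(z-y)\,dF(y)\bigr\}}{\mathrm{Var}\bigl\{2(z-X)^+(1-F(X))\bigr\}}.
\]
The second term in the numerator is the first-order effect of estimating $F$ inside $\widehat M$, and nothing absorbs it. Take $X\sim U(0,1)$ and $z=1/2$. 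Then $2zh_1(X)=(1-X)^2I(X\le 1/2)+\tfrac14 I(X>1/2)$, while $M+zS_h=(1-2X)(1-X)I(X\le 1/2)$. This gives $\mathrm{Var}\{2zh_1(X)\}=37/720$ and $EM^2=247/2880$, so $c=148/247\approx 0.60$ (and $c=1/4$ if $z=1$). More generally, for any $x,x'$ the numerator's function of $X$ changes by no more than the denominator's, so $c\le 1$. The EL interval is therefore asymptotically conservative, not exact.

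Appealing to the proof of Theorem~\ref{thm:el} does not help, because that argument never handles the plug-in. It sets $g_i=K(X_i,S_0)$ with the true $F$ and applies the i.i.d.\ central limit theorem to $\bar g_n$, so the numerator and denominator variances agree by construction. That is exactly what fails once $F$ is replaced by $F_n$. The paper's omitted proof, which refers to that template, has the same hole, and your V-statistic computation correctly diagnoses it.

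Your fallback of dividing by the projection variance does give a valid result: $\hat c^{-1}J_1(S_h)\to\chi^2_1$ for any consistent $\hat c$. But that is an adjusted statistic and a different theorem; the JEL of Theorem~\ref{thm:sstjel} is the other standard remedy.

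Your remark on the indicator is also right. With $M$ as literally written, $EM=2\int_z^\infty(z-x)(1-F(x))\,dF(x)<0$ whenever $F$ is continuous with $F(z)<1$, and then $J_1(S_h)\to\infty$.
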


Using Theorem \ref{thm_sstel}, we construct a $(1-\alpha)$ level empirical likelihood  based confidence interval for $S_h$  as
\begin{equation*}
  \big(S_h|J_1(S_h) \le\chi_{\alpha}^2(1)\big).
\end{equation*}

 \subsection{JEL based inference for SST index}
We can obtain a JEL based confidence interval for $S_h$  similar to that of $S$ as discussed in subsection 2.3.
To develop JEL-based inference for $S_h$ we make use of the U-statistics-based estimator given  in subsection 3.1.  We define jackknife pseudo values for $S_{h}$ as
  \begin{equation*}
    \widehat{Q}_{k}=n\widehat{S}_h-(n-1)\widehat{S}_{h,n-1,k};\,\,k=1,2,...,n,
  \end{equation*}
  where $\widehat{S}_{h,n-1,k}$ is calculated from (\ref{23}) using $X_1,X_2,...,X_{k-1},X_{k+1},...,X_{n}$.
  Define the jackknife empirical likelihood ratio  for $S_{h}$  as
  \begin{equation*}
 J_2(S_h)=\sup_{\bf p} \big(\prod_{i=1}^{n}{(np_i)};\,\, \sum_{i=1}^{n}{p_i}=1;\,\,\sum_{i=1}^{n}{p_i \widehat{Q}_{k}}=0\big).
\end{equation*}
By Lagrange multiplier method, we obtain the jackknife empirical log likelihood ratio  as
\begin{equation*}
  J_2(S_h)=2\sum_{k=1}^{n}\log\big(1+\lambda_{4}\widehat{Q}_{k}\big),
\end{equation*}
where $\lambda_{4}$ is the solution of
   $$\frac{1}{n}\sum_{k=1}^{n}{\frac{\widehat{Q}_{k}}{1+\lambda_{4} \widehat{Q}_{k}}}=0,$$
    provided
\begin{equation}\label{lamb11}
  \min_{{1\le k\le n}}\widehat{Q}_{k}<\widehat{S}_h<  \max_{1\le k\le n}\widehat{Q}_{k}.
\end{equation}Next, we obtain the asymptotic distribution of the jackknife empirical log likelihood ratio and the result is stated in the following theorem, and the proof is direct from Theorem 1 of Jing et al. (2009) for the appropriate choice of the kernel.
   \begin{theorem}\label{thm:sstjel} Let $g_2(x)=E\left(\psi_3(X_1,X_2)|X_1=x\right)$ and $\sigma_{g}^{2}=Var(g_2(X))$.
  Assume that   $\sigma_{g}^{2}>0$ and $E\left(\psi_3(X_1,X_2)\right)<\infty$.  Then, as $n\rightarrow \infty$, $J_2(S_h)$ converges in distribution to $\chi^2$ a random variable with one degree of freedom.
\end{theorem}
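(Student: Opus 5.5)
The plan is to reduce Theorem \ref{thm:sstjel} to Theorem 1 of Jing et al. (2009), which covers one-sample U-statistics of degree 2. The JEL statistic here is built from the pseudo-values $\widehat{Q}_k$ of the U-statistic $\widehat{S}_h$ in (\ref{23}), whose kernel $\psi_3$ has mean $S_h$. Before applying the general theorem, I read the constraint as $\sum_k p_k(\widehat{Q}_k - S_h)=0$, i.e.\ the pseudo-values are centred at the hypothesized value, and I work with the centred kernel $h(x_1,x_2)=\psi_3(x_1,x_2)-S_h$.

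\textbf{Step 1: check the hypotheses of Jing et al.\ (2009).} That theorem requires $E h^2(X_1,X_2)<\infty$ and $\sigma_g^2>0$. The second is assumed in the statement. For the first, note that $\psi_3(x_1,x_2)=\bigl(1-\min(x_1,x_2)/z\bigr)I(\min(x_1,x_2)\le z)$. Since incomes are nonnegative, $0\le\psi_3\le 1$, so all moments are finite. The condition $E(\psi_3)<\infty$ in the statement is therefore automatically strengthened to a bounded kernel.

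\textbf{Step 2: recall the argument, to make the reduction transparent.} There are three ingredients.
\begin{itemize}
\item \emph{Averaging identity.} $n^{-1}\sum_k \widehat{Q}_k=\widehat{S}_h$, which follows by counting pairs exactly as in the Sen case.
\item \emph{Hoeffding decomposition.} It gives $\widehat{Q}_k - S_h = 2(g_2(X_k)-S_h) + o_p(1)$ in the appropriate averaged sense. Consequently:
  \begin{itemize}
  \item $\sqrt n(\widehat{S}_h-S_h)\to N(0,4\sigma_g^2)$;
  \item $n^{-1}\sum_k(\widehat{Q}_k-S_h)^2\to 4\sigma_g^2$ in probability, the jackknife variance consistency for U-statistics (Arvesen-type);
  \item $\max_k|\widehat{Q}_k-S_h|=o_p(\sqrt n)$, which holds trivially here because the kernel is bounded, so every $\widehat{Q}_k$ is $O(1)$.
  \end{itemize}
\item \emph{Standard Owen argument.} From the equation for $\lambda_4$, one obtains $|\lambda_4|=O_p(n^{-1/2})$ and
\[
\lambda_4=\frac{\sum_k(\widehat{Q}_k-S_h)}{\sum_k(\widehat{Q}_k-S_h)^2}+o_p(n^{-1/2}).
\]
A Taylor expansion of $\log(1+x)$ then yields
\[
J_2(S_h)=\frac{n(\widehat{S}_h-S_h)^2}{n^{-1}\sum_k(\widehat{Q}_k-S_h)^2}+o_p(1)\;\to\;\chi^2_1 .
\]
\end{itemize}

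\textbf{Main obstacle.} The only nontrivial point is the variance consistency $n^{-1}\sum_k(\widehat{Q}_k-S_h)^2\to4\sigma_g^2$. Jing et al.\ establish it under $Eh^2<\infty$, so boundedness of $\psi_3$ settles it directly. One should also note that the existence condition (\ref{lamb11}) holds with probability tending to one. This follows because $\sigma_g^2>0$ together with the CLT makes $S_h$ an interior point of the convex hull of the $\widehat{Q}_k$ asymptotically.
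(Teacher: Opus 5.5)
Your proposal is correct and takes essentially the paper's approach. The paper proves this theorem simply by invoking Theorem 1 of Jing et al.\ (2009) with kernel $\psi_3$, and it sketches the same averaging-identity and Owen-expansion argument for the Sen case. Your extra remarks fill in details the paper leaves implicit:
\begin{itemize}
\item you centre the constraint and the existence condition at $S_h$ rather than using $\widehat{Q}_k$ and $\widehat{S}_h$ as written;
\item you observe that $0\le\psi_3\le 1$ supplies the second-moment condition Jing et al.\ actually require, rather than the stated $E(\psi_3)<\infty$;
\item boundedness of $\psi_3$ also makes every $\widehat{Q}_k$ bounded, so $\max_k|\widehat{Q}_k-S_h|=o_p(\sqrt{n})$ is immediate.
\end{itemize}
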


Using Theorem \ref{thm:sstjel}, we construct a $(1-\alpha)$ level JEL  based confidence interval for $S_h$  as
\begin{equation*}
  \Big(S_h|J_2(S) \le\chi_{\alpha}^2(1)\Big).
\end{equation*}

\section{Simulation results}

In this section, a comprehensive Monte Carlo study has been conducted to compare the proposed estimators and confidence intervals outlined in Sections 2 and 3.  First, we compare the U-statistics based estimator obtained in Sections 2 and 3 with the other available estimators of the Sen and SST indices. We calculate the bias and mean squared error of these estimators for comparison.  Subsequently, we compare the proposed EL and JEL-based confidence intervals with the confidence interval constructed using the approach outlined in Davidson (2009). The performance of the confidence intervals is evaluated based on coverage probability and average width. The simulation is carried out in R and repeated ten thousand times.

\subsection{Comparison of Estimators}
The performance of the proposed estimators of Sen is compared with the plug-in estimator, while the proposed estimator of the SST index is evaluated against both the plug-in estimator and the one introduced by Davidson (2009), as presented in equation (\ref{sendav}). The bias and mean squared error of these estimators are determined for various distributions. In the simulation, we specified the value of $z$ as $1.41$.

First, we simulate observations from an exponential distribution. We simulate observations using different sample sizes ($n=20,40,60,80,100$). The bias and MSE obtained for the Sen index and SST index using the estimators given in (\ref{sendav}) and (\ref{senustat}) are reported in Tables \ref{tab_sen_exp} and \ref{tab_sst_exp}.
\begin{table}
\caption{Sen index comparison: Exponential distribution ($\lambda$)}
\label{tab_sen_exp}
\centering
\scalebox{0.8}{
\begin{tabular}{|c|c|c|c|c|c|}\hline
&&\multicolumn{2}{c|}{Proposed estimator} & \multicolumn{2}{c|}{Plug-in estimator} \\ \cline{3-6}
$\lambda$&$n$ & Bias ($\times 10^{-2})$& MSE ($\times 10^{-2}$) & Bias  ($\times 10^{-2})$& MSE ($\times 10^{-2}$) \\ \hline
\multirow{5}{*}{2}&20&0.241 & 0.201& -0.672& 0.221	\\
&40& 0.167&0.180&-0.451&0.184\\
&60&-0.152&0.109& -0.374&0.111	\\
&80& -0.141&0.087&-0.243	&0.088\\
&100&-0.084&0.067 &-0.166	&0.068\\ \hline
\multirow{5}{*}{4}&20& 0.197& 0.042& -0.349&0.046	\\
&40&0.128	&0.033&-0.276 &0.034\\
&60&-0.107&0.024& -0.244&0.025	\\
&80&0.067&0.014 &	-0.091&0.014\\
&100&-0.043& 0.011&	-0.067&0.013\\ \hline
\multirow{5}{*}{6}&20& -0.094& 0.021& -0.280&	0.024\\
&40&	-0.061&0.013& -0.240&0.014\\
&60&-0.043&0.010& -0.126&	0.012\\
&80&0.012& 0.010&	-0.090&0.011\\
&100&-0.010& 0.009&	-0.043&0.010\\ \hline
\end{tabular}}
\end{table}

\begin{table}
\centering
\caption{SST index comparison: Exponential distribution ($\lambda$) }
\label{tab_sst_exp}
\scalebox{0.8}{
\begin{tabular}{|c|c|c|c|c|c|c|c|}\hline
&&\multicolumn{2}{c|}{Proposed estimator} & \multicolumn{2}{c|}{Shorrock estimator}&\multicolumn{2}{c|}{Davidson estimator} \\ \cline{3-8}
$\lambda$ &$n$ & Bias ($\times 10^{-2})$& MSE ($\times 10^{-2})$& Bias ($\times 10^{-2})$ & MSE ($\times 10^{-2})$ & Bias ($\times 10^{-2})$ & MSE ($\times 10^{-2})$\\ \hline
\multirow{5}{*}{2}&20& 0.111& 0.203& -0.669&0.213& -0.669&	0.213\\
&40&-0.137& 0.104& -0.529&0.108	& -0.529&0.108\\
&60&0.078& 0.068& -0.182&0.069	& -0.182&0.069	\\
&80& 0.071& 0.054& -0.123&0.055	& -0.123&	0.055	\\
&100& -0.029& 0.041& -0.186&0.042	& -0.018&	0.042\\ \hline
\multirow{5}{*}{4}&20&0.171 &0.051 &-0.315 &	0.063& -0.315&	0.063\\
&40&-0.101 &0.041 &-0.275 &	0.047& -0.275&	0.047\\
&60&0.080 &0.024 &-0.105&	0.026& -0.105&	0.026\\
&80&0.042 &0.013 &-0.068 &	0.018& -0.068&	0.018\\
&100&0.021 &0.011 &-0.034 &	0.016& -0.034&	0.016\\ \hline
\multirow{5}{*}{4}&20&0.190 &0.067 &-0.234 &	0.071& -0.234&	0.071\\
&40&-0.091 &0.056 &-0.171 &	0.057& -0.171&	0.056\\
&60&0.086 &0.032 &-0.111&	0.033& -0.111&0.033\\
&80&0.064 &0.026 &-0.074 &	0.029& -0.074&	0.029\\
&100&0.039 &0.010 &-0.053 &	0.012& -0.053&	0.012\\\hline
\end{tabular}}
\end{table}

Next, we simulate observations from the Pareto distribution using various values for the scale parameter $k$ and the shape parameter $\alpha$. The use of the Pareto distribution is appropriate as it is frequently used to model income distribution.  The bias and mean squared error for the Sen index and SST index are presented in Table \ref{tab_sen_pareto} and Table \ref{tab_sst_pareto}, respectively.
\begin{table}
\centering
\caption{Sen index comparison: Pareto distribution $(k,\alpha)$ }
\label{tab_sen_pareto}
\scalebox{0.8}{
\begin{tabular}{|c|c|c|c|c|c|}\hline
&&\multicolumn{2}{c|}{Proposed estimator} & \multicolumn{2}{c|}{Plug-in estimator} \\ \cline{3-6}
$(k,\alpha)$ &$n$ & Bias ($\times 10^{-2})$& MSE ($\times 10^{-2})$ & Bias  ($\times 10^{-2})$& MSE ($\times 10^{-2})$  \\ \hline
\multirow{5}{*}{(1,1)}&20&-7.341&0.543&-7.611&0.546\\
&40&-3.231&0.311&-3.412&0.342\\
&60&2.413&	 0.287&2.821&0.279\\
&80&1.812&0.181&1.861&0.192\\
&100&1.523&	0.151&1.561& 0.173 \\\hline
\multirow{5}{*}{(1,2)}&20& -6.067 & 0.071& -0.675&0.079	\\
&40& 3.067	&0.041 &3.876& 0.041\\
&60&0.000&0.020& -0.086&	0.020\\
&80&0.000& 0.011&-0.054	&0.010\\
&100&0.000& 0.010&-0.041	&0.010\\ \hline

\multirow{5}{*}{(1,3)}&20&0.341 &0.071 &0.381 &	0.072\\
&40&0.125&0.061&0.123 &0.061\\
&60&0.007&0.042& 0.007&0.043	\\
&80&0.001&0.020	&0.002&0.020\\
&100& 0.000&0.010&	0.000&0.010\\ \hline
\end{tabular}}
\end{table}

\begin{table}
\centering
\caption{SST index comparison: Pareto distribution $(k,\alpha)$}
\label{tab_sst_pareto}
\scalebox{0.8}{
\begin{tabular}{|c|c|c|c|c|c|c|c|}
\hline
&&\multicolumn{2}{c|}{Proposed estimator} & \multicolumn{2}{c|}{Shorrock estimator} & \multicolumn{2}{c|}{Davidson estimator} \\ \cline{3-8}
$(k,\alpha)$& $n$ & Bias ($\times 10^{-2}$) & MSE ($\times 10^{-2}$) & Bias ($\times 10^{-2}$) & MSE ($\times 10^{-2}$) & Bias ($\times 10^{-2}$) & MSE ($\times 10^{-2}$) \\ \hline

\multirow{5}{*}{(1,1)} & 20 & 0.089 & 0.045 & -0.091 & 0.046 & -0.094 & 0.046 \\
& 40 & 0.078 & 0.033 & -0.085 & 0.034 & -0.085 & 0.034 \\
& 60 & 0.051 & 0.031 & -0.067& 0.031 & -0.067 & 0.032 \\
& 80 & 0.023 & 0.011 & -0.025 & 0.012 & -0.026 & 0.012 \\
& 100 & 0.011 & 0.011 & -0.016 & 0.011 & -0.017 & 0.011 \\ \hline

\multirow{5}{*}{(1,2)} & 20 & 0.041 & 0.095 & -0.276 & 0.094 & -0.277 & 0.094 \\
& 40 & 0.021 & 0.035 & -0.062 & 0.035 & -0.062 & 0.035 \\
& 60 & 0.011 & 0.013 & -0.042 & 0.016 & -0.042 & 0.017 \\
& 80 & 0.008 & 0.013 & -0.029 & 0.013 & -0.029 & 0.013 \\
& 100 & 0.004 & 0.009 & -0.012 & 0.011 & -0.012 & 0.011 \\ \hline

\multirow{5}{*}{(1,3)} & 20 & 0.050 & 0.075 & -0.250 & 0.075 & -0.250 & 0.075 \\
& 40 & 0.045 & 0.033 & -0.195 & 0.034 & -0.191 & 0.034 \\
& 60 & 0.011 & 0.023 & -0.089 & 0.023 & -0.089 & 0.023 \\
& 80 & 0.006 & 0.017 & -0.013 & 0.017 & -0.013 & 0.017 \\
& 100 & 0.005 & 0.015 & -0.011 & 0.015 & -0.011 & 0.015 \\ \hline
\end{tabular}
}
\end{table}

Finally, we generated observations from a heavy-tailed distribution, specifically a log-normal distribution, for parameter choices for $\mu$ and $\sigma$. The bias and MSE calculated for the Sen and SST indices are presented in Tables \ref{tab_sen_lnorm} and \ref{tab_sst_lnorm}, respectively.
\begin{table}
\centering
\caption{Sen index comparison: Log normal distribution $(\mu,\sigma)$}
\label{tab_sen_lnorm}
\scalebox{0.8}{
\begin{tabular}{|c|c|c|c|c|c|}
\hline
&&\multicolumn{2}{c|}{Proposed estimator} & \multicolumn{2}{c|}{Plug-in estimator} \\ \cline{3-6}
$(\mu,\sigma)$&$n$ & Bias ($\times 10^{-2})$& MSE($\times 10^{-2})$ & Bias  ($\times 10^{-2})$& MSE ($\times 10^{-2})$  \\ \hline
\multirow{5}{*}{(0,1)} &20&-7.341&0.544&-7.611&0.544\\
&40&-3.231&0.311&-3.412&0.340\\
&60&2.413&0.283&2.821&0.279\\
&80&1.812&0.188&1.861&0.191\\
&100&1.523&0.157&1.561&0.170\\ \hline
\multirow{5}{*}{(1,1)} &20& -6.325& 0.616& -6.761&0.617\\
&40& -4.130&0.318&-4.786 &0.423\\
&60& 3.123&0.287& 3.451&0.301\\
&80&1.987&0.175& -2.123&0.180\\
&100&0.934&0.130& 1.112&0.130\\ \hline
\multirow{5}{*}{(1,2)} &20&-2.841& 0.877&-3.650&0.901\\
&40&-1.326&0.841&-1.981&0.867\\
&60&-0.926&0.612&-1.121&0.612\\
&80&-0.671&0.452&-0.871&0.501\\
&100&-0.431&0.278&-0.461&0.302\\ \hline
\end{tabular}
}
\end{table}

\begin{table}
\centering
\caption{SST index comparison: Log normal distribution $(\mu,\sigma)$}
\label{tab_sst_lnorm}
\scalebox{0.8}{
\begin{tabular}{|c|c|c|c|c|c|c|c|}\hline
&&\multicolumn{2}{c|}{Proposed estimator} & \multicolumn{2}{l|}{Shorrock estimator}&\multicolumn{2}{c|}{Davidson estimator} \\ \cline{3-8}
$(\mu,\sigma)$&$n$ & Bias($\times 10^{-2})$ & MSE ($\times 10^{-2})$& Bias  ($\times 10^{-2})$ & MSE ($\times 10^{-2})$ & Bias ($\times 10^{-2})$ & MSE ($\times 10^{-2})$ \\ \hline
\multirow{5}{*}{(0,1)}&20&0.077 &0.758 &-0.821 &	0.758& -0.821&	0.758\\
&40&0.057 &0.691 &-0.627 &	0.699& -0.627&	0.699\\
&65&0.039 &0.406 &-0.541 &	0.422& -0.541&	0.422\\
&80&0.028 &0.348 &-0.271 &	0.348& -0.271&	0.348\\
&100&0.010 &0.296 &-0.097 &	0.298& -0.097&	0.298\\  \hline
\multirow{5}{*}{(1,1)}&20 &-0.271& 0.606& -0.374&0.611	&-0.374 &0.611	\\
&40&-0.133& 0.304& -0.346&0.346	&-0.346&0.346	\\
&60&-0.033& 0.202& -0.117&0.214	&-0.117 &0.214	\\
&80&-0.021& 0.125& -0.097&0.127	&-0.097 &0.127	\\
&100&-0.010& 0.101& -0.066&0.101	&-0.066 &0.101	\\ \hline
\multirow{5}{*}{(1,2)}&20& -0.191& 1.321& -1.024&1.356	&-1.024 &1.356	\\
&40& -0.103& 0.911& -0.901&0.911	&-0.901&0.911	\\
&60& -0.076& 0.712& -0.752&0.763	&-0.752&0.763	\\
&80& -0.042& 0.652& -0.581&0.652	&-0.581&0.652	\\
&100& -0.028& 0.594& -0.109&0.595	&-0.109&0.595\\  \hline
\end{tabular}
}
\end{table}

It is observed that for the Sen index, the proposed estimator exhibits a lower bias compared to the plug-in estimator, while the mean squared errors of both estimators remain comparable across the three distributions under consideration.   The proposed estimator for the SST index demonstrates reduced bias relative to the plug-in estimator (Shorrock estimator in the table) and the Davidson estimator, while the mean squared errors of all three estimators remain comparable.

\subsection{Comparison of confidence intervals}
The proposed EL and JEL-based confidence intervals were compared with the confidence interval suggested by Davidson (2009), utilizing normal approximation for both the Sen and SST indices. As in the previous subsection, in the simulation study for confidence interval, we considered the same three distributions: exponential, Pareto, and lognormal distributions.

 The  coverage probability and average length of the estimated confidence intervals of the Sen index for each distribution are presented in Tables \ref{tab_ci_sen_exp}, \ref{tab_ci_sen_pareto}, and \ref{tab_ci_sen_lnorm}, respectively.

\begin{table}
\centering
\caption{Comparison of confidence intervals of Sen index: Exponential distribution $(\lambda)$ }
\label{tab_ci_sen_exp}
\scalebox{0.9}{
\begin{tabular}{|c|c|c|c|c|c|c|c|}
\hline
 & & \multicolumn{2}{c|}{$\lambda=2$} & \multicolumn{2}{c|}{$\lambda=4$} & \multicolumn{2}{c|}{$\lambda=6$} \\ \cline{3-8}
$n$ & Method & CP & AL & CP & AL & CP & AL \\ \hline

\multirow{3}{*}{20}
 & EL        & 0.930 & 0.521 & 0.930 & 0.503 & 0.930 & 0.497 \\
 & JEL       &   0.912    & 0.575      & 0.903 & 0.484 & 0.903 & 0.353 \\
 & Davidson  & 0.900 & 0.228 & 0.970 & 0.191 & 0.970 & 0.189 \\ \hline

\multirow{3}{*}{40}
 & EL        & 0.910 & 0.379 & 0.920 & 0.365 & 0.930 & 0.360 \\
 & JEL       &  0.921     &  0.463     & 0.908 & 0.481 & 0.908 & 0.334 \\
 & Davidson  & 0.920 & 0.162 & 0.930 & 0.105 & 0.940 & 0.101 \\ \hline

\multirow{3}{*}{60}
 & EL        & 0.940 & 0.309 & 0.940 & 0.297 & 0.940 & 0.293 \\
 & JEL       &   0.949    &   0.312    & 0.924 & 0.341 & 0.924 & 0.321 \\
 & Davidson  & 0.970 & 0.129 & 0.945 & 0.099 & 0.960 & 0.091 \\ \hline

\multirow{3}{*}{80}
 & EL        & 0.965 & 0.268 & 0.970 & 0.257 & 0.970 & 0.254 \\
 & JEL       &   0.950    &  0.276     & 0.936 & 0.302 & 0.920 & 0.311 \\
 & Davidson  & 0.930 & 0.112 & 0.937 & 0.079 & 0.960 & 0.085 \\ \hline

\multirow{3}{*}{100}
 & EL        & 0.945 & 0.241 & 0.930 & 0.232 & 0.940 & 0.229 \\
 & JEL       &    0.949   &   0.234    & 0.949 & 0.251 & 0.947 & 0.290 \\
 & Davidson  & 0.914 & 0.102 & 0.951 & 0.056 & 0.930 & 0.061 \\ \hline
\end{tabular}}
\end{table}

 \begin{table}
\centering
\caption{Comparison of confidence interval of Sen index: Pareto distribution ($k,\alpha$)}
\label{tab_ci_sen_pareto}
\scalebox{0.9}{
\begin{tabular}{|l|l|l|l|l|l|l|l|}
\hline
&&\multicolumn{2}{c|}{($k=1,\alpha=1 $)}&\multicolumn{2}{c|}{($k=1,\alpha=2 $)}&\multicolumn{2}{c|}{($k=1,\alpha=3$)}\\\cline{3-8}
$n$                   & Interval & CP & AL& CP & AL& CP & AL \\ \hline
\multirow{3}{*}{20} &EL    &  0.870& 0.110 & 0.920 &0.138& 0.950 &  0.149       \\
                    & JEL     &0.920 & 0.223 &0.920  &0.116&  0.920&  0.077        \\
                    & Davidson &  0.710 & 0.052 & 0.790 &0.075  & 0.890&  0.083    \\ \hline
\multirow{3}{*}{40} &EL  & 0.920  & 0.082 & 0.920 & 0.101 &0.950& 0.109                  \\
                    & JEL   &0.937 & 0.143 &0.942  &0.072&  0.934&  0.048        \\
                    & Davidson &  0.780  & 0.046 & 0.870 &0.063 & 0.910&  0.067     \\ \hline
\multirow{3}{*}{60} & EL  &  0.960 & 0.067 & 0.940 & 0.083 &0.930&  0.089 \\
                    & JEL   &0.947 & 0.108 &0.944 &0.058&  0.944&  0.036        \\
                    & Davidson &  0.870  & 0.043 & 0.880 &0.055 & 0.880&  0.057    \\ \hline
\multirow{3}{*}{80} & EL   & 0.940& 0.059 & 0.960 &  0.072&0.957& 0.057             \\
                   & JEL   &0.940 & 0.099 &0.939 &0.050&  0.946&  0.033       \\
                    & Davidson &  0.890  & 0.039 & 0.950 &0.049 & 0.945&  0.033    \\ \hline
\multirow{3}{*}{100} & EL    & 0.900& 0.053 &0.910  &0.065  &0.961& 0.042   \\
                   & JEL   &0.956 & 0.084 &0.957 &0.042&  0.957&  0.027       \\
                    & Davidson &  0.900  & 0.037 & 0.920 &0.045 & 0.959&  0.027   \\ \hline

\end{tabular}}
\end{table}

\begin{table}
\centering
\caption{Comparison of confidence interval  of Sen index:  Lognormal distribution $(\mu,\sigma)$ }
\label{tab_ci_sen_lnorm}
\scalebox{0.9}{
\begin{tabular}{|l|l|l|l|l|l|l|l|}
\hline
&&\multicolumn{2}{c|}{$(\mu,\sigma)=(0,1)$}&\multicolumn{2}{c|}{$(\mu,\sigma)=(1,1)$}&\multicolumn{2}{c|}{$(\mu,\sigma)=(1,2)$}\\\cline{3-8}
$n$                   & Interval & CP & AL& CP & AL& CP & AL \\ \hline
\multirow{3}{*}{20} &EL& 0.930 & 0.455 &0.870  &0.266&  0.920&     0.434    \\
                    & JEL & 0.908   &  0.276&  0.914&  0.779&   0.931    &   0.713 \\
                    & Davidson     & 0.910 & 0.308 & 0.810 &0.177& 0.890 &   0.151  \\ \hline
\multirow{3}{*}{40} &EL  & 0.960  &  0.336& 0.900 & 0.211 &0.960& 0.326     \\
                    & JEL    & 0.924 &0.197  & 0.941 &0.561  &0.941&    0.517           \\
                    & Davidson      & 0.930  & 0.227 &0.920  & 0.151 &0.930& 0.124    \\ \hline
\multirow{3}{*}{60} & EL  &  0.910 & 0.265 & 0.940 &0.168  &0.950& 0.109  \\
                    & JEL    &   0.939& 0.156 & 0.945 & 0.423 &0.947&        0.398       \\
                    & Davidson       & 0.920 &  0.184& 0.930 &0.124&  0.900&      0.194        \\ \hline
\multirow{3}{*}{80} & EL   & 0.960& 0.236 & 0.940 & 0.146 &0.950&         0.227     \\
                    & JEL    & 0.949 & 0.135 & 0.961 & 0.272 &0.951&    0.271\\
                    & Davidson  & 0.950& 0.161 & 0.930 &  0.110& 0.930&0.172            \\ \hline
\multirow{3}{*}{100} & EL    &0.970 & 0.214 &0.950  & 0.133 &0.950& 0.208   \\
                    & JEL    &0.0946 & 0.119 &0.941  &0.198  &0.950& 0.185   \\
                    & Davidson      & 0.950&  0.145&  0.910& 0.100 &0.970&           0.156      \\ \hline

\end{tabular}}
\end{table}
The above tables indicate that both EL and JEL-based confidence intervals exhibit better coverage probability compared to the normal distribution-based confidence intervals proposed by Davidson. The performance of EL and JEL-based confidence intervals is better even with small sample sizes.

Next, in Tables \ref{tab_ci_exp}, \ref{tab:ci_sst_pareto}, and \ref{tab_ci_sst_lnorm}, we present the coverage probability and average width of the proposed EL and JEL confidence intervals for the SST index, comparing them with the normal distribution-based confidence intervals suggested by Davidson.

\begin{table}
\centering
\caption{Comparison of confidence interval of SST index: Exponential distribution $(\lambda)$ }
\label{tab_ci_exp}
\scalebox{0.9}{
\begin{tabular}{|l|l|l|l|l|l|l|l|}
\hline
&&\multicolumn{2}{c|}{$\lambda=2$}&\multicolumn{2}{c|}{$\lambda=4$}&\multicolumn{2}{c|}{$\lambda=6$}\\\cline{3-8}
$n$                   & Interval & CP & AL& CP & AL& CP & AL \\ \hline
\multirow{3}{*}{20} &EL    & 0.946 & 0.438 &0.947  &0.529& 0.952 &     0.517    \\
                    & JEL     &0.920 & 0.223 &0.920  &0.116&  0.920&  0.077        \\
                    & Davidson &  0.923  & 0.223 & 0.923 &0.116  & 0.925&  0.077     \\ \hline
\multirow{3}{*}{40} &EL  &  0.947 &  0.341& 0.951 & 0.436 & 0.961&       0.373            \\
                    & JEL   &0.937 & 0.143 &0.942  &0.072&  0.934&  0.048        \\
                    & Davidson &  0.942  & 0.147 & 0.940 &0.072 & 0.940&  0.048     \\ \hline
\multirow{3}{*}{60} & EL  & 0.946  & 0.317 &0.948  &0.342  & 0.951&  0.341 \\
                    & JEL   &0.947 & 0.108 &0.944 &0.058&  0.944&  0.036        \\
                    & Davidson &  0.951  & 0.114 & 0.945 &0.058 & 0.944&  0.037     \\ \hline
\multirow{3}{*}{80} & EL  & 0.953 &0.276 & 0.949 &  0.289 & 0.949 &   0.276          \\
                   & JEL   &0.940 & 0.099 &0.939 &0.050&  0.946&  0.033       \\
                    & Davidson &  0.940  & 0.092 & 0.940 &0.050 & 0.945&  0.033    \\ \hline
\multirow{3}{*}{100} & EL    &0.943 &  0.247&0.946  & 0.244 & 0.948&  0.246  \\
                   & JEL   &0.956 & 0.084 &0.957 &0.042&  0.957&  0.027       \\
                    & Davidson &  0.960  & 0.080 & 0.959 &0.041 & 0.959&  0.027   \\ \hline

\end{tabular}}
\end{table}

\begin{table}
\centering
\caption{Comparison of confidence interval of SST index: Pareto  distribution $(k,\alpha)$ }
\label{tab:ci_sst_pareto}
\scalebox{0.9}{
\begin{tabular}{|l|l|l|l|l|l|l|l|}
\hline
&&\multicolumn{2}{c|}{$(k,\alpha)=(1,1)$}&\multicolumn{2}{c|}{$(k,\alpha)=(1,2)$}&\multicolumn{2}{c|}{$(k,\alpha)=(1,3)$}\\\cline{3-8}
$n$                   & Interval & CP & AL& CP & AL& CP & AL \\ \hline
\multirow{3}{*}{20} &EL    & 0.943 & 0.342 &  0.941&0.241& 0.907 & 0.232        \\
                    & JEL     &0.933 & 0.126 &0.939  &0.131&  0.917&  0.117       \\
                    & Davidson &  0.866  & 0.102 & 0.893 &0.104  & 0.887&  0.093   \\ \hline
\multirow{3}{*}{40} &EL  &  0.942 & 0.237 &0.938  & 0.110 &0.945&0.   109                \\
                    & JEL   &0.941& 0.092 &0.943  &0.089&  0.947&  0.103       \\
                    & Davidson &  0.911  & 0.090 & 0.947&0.083 & 0.937&  0.070    \\ \hline
\multirow{3}{*}{60} & EL  &0.950   & 0.081 &  0.949& 0.079 &0.951&0.077  \\
                    & JEL   &0.946 & 0.075 &0.952 &0.070&  0.949&  0.067       \\
                    & Davidson &  0.929  & 0.067 & 0.938&0.066& 0.942&  0.058    \\ \hline
\multirow{3}{*}{80} & EL   &0.943 & 0.067 & 0.944 &0.059  &0.951& 0.064             \\
                   & JEL   &0.945& 0.061 &0.943 &0.061&  0.952&  0.052    \\
                    & Davidson &  0.912 & 0.058& 0.931 &0.058 & 0.945&  0.053   \\ \hline
\multirow{3}{*}{100} & EL    & 0.939& 0.151 &0.945  & 0.055 &0.950&   0.050 \\
                   & JEL   &0.948 & 0.048 &0.942 &0.052&  0.947&  0.048     \\
                    & Davidson &  0.935  & 0.053& 0.931 &0.052 & 0.938&  0.046   \\ \hline

\end{tabular}}
\end{table}

\begin{table}
\centering
\caption{Comparison of confidence interval of SST index: Lognormal Distribution}
\label{tab_ci_sst_lnorm}
\scalebox{0.9}{
\begin{tabular}{|l|l|l|l|l|l|l|l|}
\hline
&&\multicolumn{2}{c|}{$(\mu,\sigma)=(0,1)$}&\multicolumn{2}{c|}{$(\mu,\sigma)=(1,1)$}&\multicolumn{2}{c|}{$(\mu,\sigma)=(1,2)$}\\\cline{3-8}
$n$                   & Interval & CP & AL& CP & AL& CP & AL \\ \hline
\multirow{3}{*}{20} &EL    &0.941  & 0.367 & 0.913 &0.312& 0.940 &0.421         \\
                    & JEL     &0.927 & 0.369 &0.904  &0.302&  0.939&  0.442        \\
                    & Davidson &  0.930  & 0.364 & 0.872 &0.279  & 0.901&  0.417     \\ \hline
\multirow{3}{*}{40} &EL  &  0.931 & 0.286 & 0.940 &0.297  &0.935&0.311                   \\
                    & JEL   &0.930& 0.247 &0.938  &0.212&  0.936&  0.312       \\
                    & Davidson &  0.929  & 0.231 & 0.916 &0.210 & 0.935&  0.306    \\ \hline
\multirow{3}{*}{60} & EL  &0.942   & 0.113 &0.935  & 0.194 &0.935&0.277   \\
                    & JEL   &0.943 & 0.108 &0.938 &0.189&  0.936&  0.260        \\
                    & Davidson &  0.940  & 0.218 & 0.918 &0.174& 0.935&  0.252     \\ \hline
\multirow{3}{*}{80} & EL   & 0.950&  0.168 & 0.951 &0.163&  0.956       &0.231     \\
                   & JEL   &0.948 & 0.172 &0.953 &0.164&  0.955&  0.225     \\
                    & Davidson &  0.940  & 0.172& 0.941 &0.152 & 0.945&  0.219    \\ \hline
\multirow{3}{*}{100} & EL    & 0.945& 0.161 & 0.949 & 0.165 &0.949&    0.198\\
                   & JEL   &0.945 & 0.160 &0.950 &0.151&  0.947&  0.205       \\
                    & Davidson &  0.941  & 0.160 & 0.933 &0.137 & 0.941&  0.197   \\ \hline

\end{tabular}}
\end{table}

The tables indicate that EL and JEL-based confidence intervals exhibit greater coverage probability, even with smaller sample sizes, in comparison to normal-based confidence intervals. The average width of these intervals is comparable; however, normal-based intervals are generally narrower. Incorporating of bias-corrected terms in the plug-in estimator influences the calculation of standard errors, leading to smaller estimated standard errors. And, further smaller estimated standard errors gives narrower confidence intervals in Davidson's method. However, narrower confidence intervals may be inaccurate and often lead to wrong conclusions if the coverage probability is insufficient. The coverage of the Davidson estimator for confidence intervals is comparatively lower making it unreliable in many instances. Thus, the proposed EL and JEL estimation methods for confidence intervals can address these problems in many applications.

\section{Illustration}

This section demonstrates the application of the proposed estimates for the Sen Index and the SST index on two distinct datasets. Initially, we conducted an analysis employing individual-level data from the Panel Study of Income Dynamics (PSID) for the survey years 2017, 2019, 2021, and 2023. The PSID is a longitudinal survey launched in 1968 by the Survey Research Center at the University of Michigan. This survey utilizes a genealogical design, selecting a representative sample of individuals whose descendants were surveyed over the years to examine the dynamics of poverty and economic well-being in the United States. Regular replenishment of the sample ensured its representativeness, rendering it appropriate for diverse academic studies.

This illustration uses individual data to aggregate various types of income, including labor market income, public and private transfer income, asset and capital income, as well as other income sources, for each individual in the dataset. This process results in a variable that encapsulates the total income received by each individual. Given that the individual level data is sourced from family level data, we have focused exclusively on those who receive income for the study's purpose.   The descriptive statistics of the income data for each year is considered in the Table \ref{tab:des_us}. Additionally, the probability distribution of income for each year is presented in Figure \ref{fig:den_us}.

\begin{table}
    \centering
     \caption{Descriptive statistics of individual incomes in US for different years}
    \begin{tabular}{ccccc}
    \hline
    \hline
   Year   &  2017& 2019 & 2021& 2023 \\
        \hline
    \hline

      $n$   &2374   & 2379 & 2007 & 2114\\
      \textit{Mean}  & 33736.970  & 34069.790 &39269.370&41479.280 \\

      \textit{SD} & 60326.610 & 48842.060 & 47397.250&46617.030 \\
     \textit{ Min }& 40   &6 &30&14\\
      \textit{Max} &2223000 & 1125000   &530000&900000  \\
      \textit{Range} & 2222960& 1124994 & 529970&899986\\
      \textit{Skewness} & 21.417& 8.650& 4.131&4.972 \\
     \textit{Kurtosis} & 734.750 &141.918 &27.535&60.898\\
      \hline
    \end{tabular}
    \label{tab:des_us}
\end{table}

The data presented in Table \ref{tab:des_us} and Figure \ref{fig:den_us} indicate a consistent increase in average income along with a decrease in the standard deviation of income over the years. This suggests a potential increase in economic well-being, while the decline in standard deviation may indicate an improvement in inequality within the population. These measures fail to adequately reflect the extent and severity of poverty over time. We calculated the Sen index and SST index using the income data for each year, and the estimates and their confidence intervals  are presented in Table \ref{tab:sst_sen_jel_us} and visually presented in Figure \ref{fig:ci_psid} . For the calculation for the Sen and SST index, we have used annual individual income to be USD 12000 as the poverty line. The U.S. Census Bureau establishes and annually updates the official poverty line thresholds. The following Table \ref{tab:opm_us} presents the Official Poverty Measure (OPM) for single individuals for each year selected for this study (U.S. Census Bureau, 2023). For the purpose of illustration, we selected the rounded OPM level of USD 12,000 for the year 2017 to facilitate comparison.

\begin{table}
    \centering
     \caption{OPM values of Single person incomes in US for different years (in USD)}
    \begin{tabular}{ccccc}

  \hline

         Years& 2017& 2019 & 2021&2023  \\
           \hline
          \hline

          \hline
          \hline
      OPM & 12,060&13,011&13,780&14,580\\

           \hline
          \hline

    \end{tabular}

    \label{tab:opm_us}
\end{table}

\begin{table}
\centering
\caption{SST and Sen  indices with jackknife empirical likelihood confidence intervals for different years}
\label{tab:sst_sen_jel_us}
\begin{tabular}{c ccc ccc}
\hline\hline
 & \multicolumn{3}{c}{\textbf{SST Index (U-statistic)}}
 & \multicolumn{3}{c}{\textbf{Sen Index (U-statistic)}} \\
\cline{2-4} \cline{5-7}
\textbf{Year}
& Estimate & JEL Lower & JEL Upper
& Estimate & JEL Lower & JEL Upper \\
\hline
2017 & 0.2692 & 0.2502 & 0.2892 & 0.1861 & 0.1727 & 0.2007 \\
2019 & 0.2584 & 0.2398 & 0.2780 & 0.1799 & 0.1667 & 0.1942 \\
2021 & 0.2382 & 0.2188 & 0.2588 & 0.1639 & 0.1503 & 0.1788 \\
2023& 0.2213 & 0.2028 & 0.2409 & 0.1520 & 0.1392 & 0.1661 \\
\hline
\hline
\end{tabular}
\end{table}

\begin{figure}
\centering
\caption{Graph showing the SST and Sen indices with jackknife empirical likelihood confidence intervals for different years}
\vspace{0.2in}
\includegraphics[width=7cm]{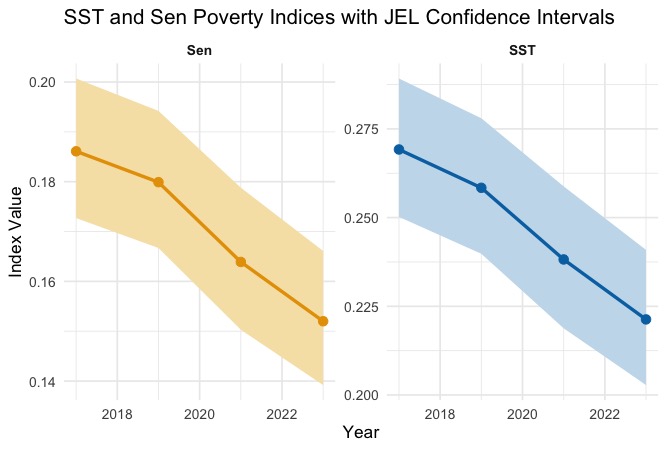}
\label{fig:ci_psid}
\end{figure}

The estimated Sen and SST index for each year indicates a consistent decline in poverty over time. Nonetheless, the outcome may appear counterintuitive for the year 2021. The wave of 2021 coincided with the pandemic period, during which evidence indicated a drop in economic activity and a decline in income globally. This data indicates a notable decrease in poverty in 2021 relative to 2019. A careful review of the literature reveals studies indicating an increase in income among the population during this period, attributed to various public financial transfers to individuals, such as unemployment benefits (Valletta et al., 2024).  To further validate this, we determined the proportion of unemployment benefits for each year, with the trend presented in Figure \ref{fig:unem}.

\begin{figure}
\centering
\caption{Proportion of unemployment benefit received by the  individuals incomes in US for different years}
\vspace{0.2in}
\includegraphics[width=7cm]{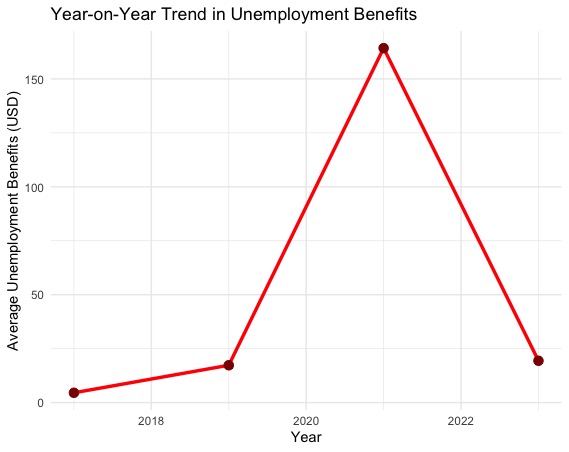}
\label{fig:unem}
\end{figure}

Consequently, it is evident that the proposed estimators successfully capture the patterns demonstrated by the data in a dependable way. The values of the proposed estimators are aligning with the existing Davidson estimates, providing additional support for the validation of the proposed estimator. It is important to highlight that the strength of the proposed estimator lies in its ability to provide better and more reliable inferences, particularly in scenarios where sample sizes are small.

Additionally, we are using a monthly household income data from India on a state-by-state basis for empirical illustration of the proposed estimators.    The household-level income data for each state is sourced from the Consumer Pyramids Household Survey (CPHS) conducted by the Centre for Monitoring Indian Economy (CMIE), and the dataset is accessible at \url{https://consumerpyramidsdx.cmie.com}. This is a systematic survey undertaken periodically to gather data on Indian household demographics, expenditures, assets, and perceptions. Annually, three data collection waves occur, each spanning four months. We utilized data from Wave 28, which includes information collected from January to April 2023 for the analysis. We consider Kerala, Tamil Nadu, and Bihar as the three states for the study due to their distinct levels of income disparity due to their underlying socio-economic conditions.

\begin{figure}
\centering
\caption{Income distribution of different states in India}
\vspace{0.2in}
\begin{tabular}{c c }
 \includegraphics[width=7cm]{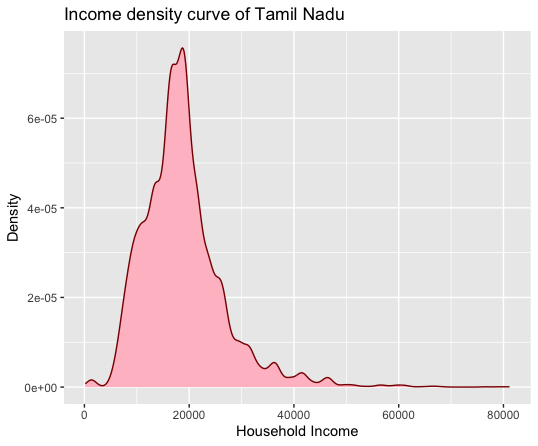}
    &\includegraphics[width=7cm]{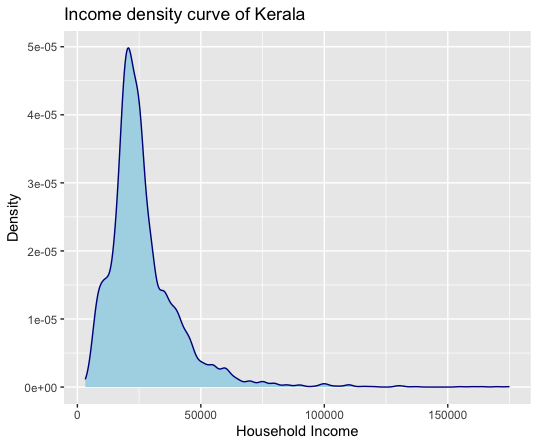}
\\\end{tabular}
\includegraphics[width=7cm]{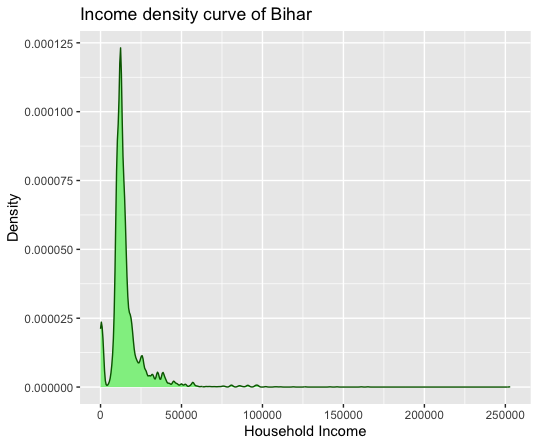}
\label{fig:den}
\end{figure}

\begin{table}
    \centering
     \caption{Descriptive statistics of Income from different states of India}
    \begin{tabular}{cccc}
    \hline
    \hline
     &   Kerala & Tamil Nadu & Bihar \\
        \hline
    \hline

      $n$   & 4310&   8129& 7475\\
      \textit{Mean}  & 26829.06 &  18736.31&15713.65\\

      \textit{SD} & 15185.06 &  7792.97 & 12018.63 \\
     \textit{ Min }&  3200& 165 &0\\
      \textit{Max} & 175000&  81150 &253000\\
      \textit{Range} & 171800&  80985& 253000\\
      \textit{Skewness} &2.75  &  1.53& 4.69 \\
     \textit{Kurtosis} & 14.4 &  5.19 &43.5\\
      \hline

    \end{tabular}

    \label{tab:des}
\end{table}

The income distribution for each state is shown in Figure \ref{fig:den}.  To enhance comprehension of the income distribution patterns among states, we have presented the descriptive statistics of the income data in Table \ref{tab:des}. The descriptive statistics indicate that the state of Kerala has a higher proportion of individuals in higher income brackets compared to Tamil Nadu and Bihar, with Bihar exhibiting a higher proportion of people in lower income strata. According to the report released by NITI Aayog (2023), a public policy think tank in India, utilizing data from the National Family and Health Survey (NFHS 5, conducted between 2019 and 2021), the hierarchy of poverty from highest to lowest is as follows: Bihar $>$ Tamil Nadu $>$ Kerala. Bihar ranks among the states with the highest poverty levels, while Kerala is among those with the lowest, making the selection of these states appropriate for this illustration. According to a study report by SBI Research (2025), the monthly poverty line is set at Rs 1,632 for individuals in rural regions and Rs 1,944 for individuals in urban areas for the fiscal year 2023-24, adjusted for inflation based on the official 2011-12 Tendulkar Committee poverty line. According to this concept, we are assuming a monthly poverty limit of Rs 2000 per person. Additionally, as the data pertains to household income, we assume the poverty line at Rs 8000 per month for a household, taking an average of four persons per household.   According to this assumption, the Sen and SST index and their confidence intervals are calculated using the proposed measures of Sen and SST are presented in the Table \ref{tab:sst_sen_jel_is} and visually presented in Figure \ref{fig:ci_hh}. Hence, we can see that the proposed measure can effectively capture the poverty in the states.
\begin{table}
\centering
\caption{SST and Sen  indices with jackknife empirical likelihood confidence intervals for different states of India}
\label{tab:sst_sen_jel_is}
\begin{tabular}{c ccc ccc}
\hline\hline
 & \multicolumn{3}{c}{\textbf{SST Index (U-statistic)}}
 & \multicolumn{3}{c}{\textbf{Sen Index (U-statistic)}} \\
\cline{2-4} \cline{5-7}
\textbf{State}
& Estimate & JEL Lower & JEL Upper
& Estimate & JEL Lower & JEL Upper \\
\hline
Kerala& 0.0050 & 0.0037 & 0.0058 & 0.0031 & 0.0025& 0.0040 \\
Tamil Nadu & 0.0150 & 0.0123& 0.0173 & 0.011 & 0.0093 & 0.0133 \\
Bihar & 0.1071 & 0.0981 & 0.1157 & 0.067 & 0.0616 & 0.0721 \\

\hline
\hline
\end{tabular}
\end{table}

\begin{figure}
\centering
\caption{Graph showing the SST and Sen indices with jackknife empirical likelihood confidence intervals for different states of India}
\vspace{0.2in}
\includegraphics[width=7cm]{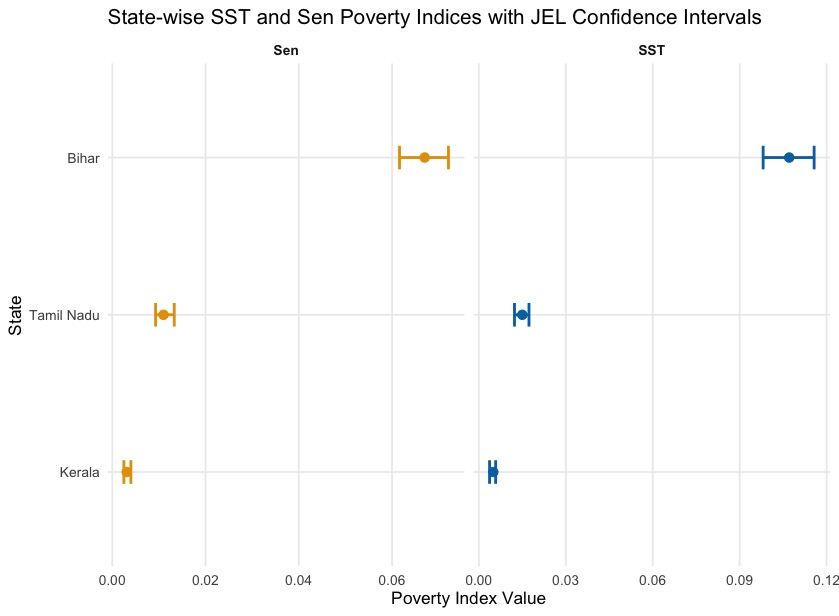}
\label{fig:ci_hh}
\end{figure}

\section{Conclusion}
This paper worked on a nonparametric inference framework for Sen and SST indices. Though, these indices play a pivotal role in poverty analysis, their non-linear structures ends up making the plug-in inference unreliable in finite sample scenario. Existing inference approaches rely primarily on large- sample approximations or bias correction techniques, which can lead to unreliable inferences when the income distributions are skewed.

We proposed new nonparametric estimators for both the Sen and SST indices using the
theory of $U$-statistic that have  a tractable variance structure under mild regularity conditions. The use of $U$-statistics theory, allowed us to establish asymptotic normality for the estimator, eliminating the need for a parametric assumptions on the income distribution. Further, relying on this, we developed the
jackknife empirical likelihood (JEL) inference for finding the confidence intervals for Sen and SST index.

Monte Carlo simulation indicated that proposed estimators exhibit substantially lower bias than Davidson (2009) estimator in small and moderate samples, while the JEL based confidence intervals achieve a appropriate coverage probability for a wide range of income distributions considered in the simulation study. The empirical illustration of the proposed estimators on the PSID data and Indian household income shows that the proposed approach can deliver stable inference there by helping in facilitating meaningful poverty comparisons across years and regions.

Thus, the paper contributes methodologically by providing a $U$-statistics based estimator which has tractable asymptotic properties. Additionally, it enhances inferential methods by proposing empirical likelihood-based procedures that are appropriate for nonlinear poverty measures in finite sample contexts frequently observed in various applications. The results can be extended to survey-weighted settings, dominance testing, and multidimensional poverty indices, thereby opening future research avenues.


 \section{Appendix}

 \noindent {\bf  Proof of Theorem 2:}
We know that
$$
\widehat{S}=\frac{2}{z}\frac{U_1}{U_2},
\qquad
S=\frac{2}{z}\frac{\Delta_1}{\Delta_2},
$$
where $U_1$ and $U_2$ are degree--2 $U$-statistics estimators  of $\Delta_1$ and $\Delta_2$ with symmetric kernels
$\psi_1$ and $\psi_2$, respectively.
By the unbiasedness property of U-statistics, we have
$$
\mathbb{E}(U_1)=\Delta_1,
\qquad
\mathbb{E}(U_2)=\Delta_2.$$
Let
$$
g(u,v)=\frac{u}{v}.
$$
Since $\Delta_2=F(z)=p>0$, a second-order Taylor expansion of
$g(U_1,U_2)$ around $(\Delta_1,\Delta_2)$ yields
\begin{align}
\frac{U_1}{U_2}
&= \frac{\Delta_1}{\Delta_2}
  + \frac{1}{\Delta_2}(U_1-\Delta_1)
  - \frac{\Delta_1}{\Delta_2^2}(U_2-\Delta_2)
  + R_n,
\label{taylor}
\end{align}
where the remainder term $R_n$ satisfies
$$
R_n
= O_p\!\left((U_1-\Delta_1)^2+(U_2-\Delta_2)^2
+|U_1-\Delta_1||U_2-\Delta_2|\right).
$$
Since $U_1$ and $U_2$ are U-statistics with finite second moments of their
kernels, it follows that
$$
U_1-\Delta_1 = O_p(n^{-1/2}),
\qquad
U_2-\Delta_2 = O_p(n^{-1/2}),
$$
and hence $R_n=O_p(n^{-1})$.

Taking expectations in \eqref{taylor} and using $\mathbb{E}(U_1-\Delta_1)=\mathbb{E}(U_2-\Delta_2)=0$, we obtain
$$
\mathbb{E}\!\left(\frac{U_1}{U_2}\right)=\frac{\Delta_1}{\Delta_2}+\mathbb{E}(R_n).
$$
Since
$\mathrm{Var}(U_1)=O(n^{-1})$ and $\mathrm{Var}(U_2)=O(n^{-1})$, the Cauchy--Schwarz inequality yields $$\mathbb{E}(R_n)=O(n^{-1}).$$
Therefore,
$$
\mathbb{E}\!\left(\frac{U_1}{U_2}\right)
\longrightarrow \frac{\Delta_1}{\Delta_2}.
$$
Multiplying both sides by $2/z$, we conclude that
$$
\mathbb{E}(\widehat{S}) \longrightarrow S,
$$
which proves that $\widehat{S}$ is an asymptotically unbiased estimator of the Sen index.\\

\noindent
{\bf Proof of Theorem 3.}
Recall that
$$
\widehat S=\frac{2}{z}\frac{U_1}{U_2},
\qquad
S=\frac{2}{z}\frac{\Delta_1}{\Delta_2},
$$
where $U_1$ and $U_2$ are degree--2 U-statistics estimating $\Delta_1$ and
$\Delta_2=F(z)>0$, respectively.

Consider,
$$
\sqrt n(\widehat S-S)
=
\frac{2\sqrt n}{z}
\frac{\Delta_2(U_1-\Delta_1)-\Delta_1(U_2-\Delta_2)}{U_2\Delta_2}.
$$
Since $U_2\xrightarrow{p}\Delta_2$, Slutsky’s theorem gives
$$
\sqrt n(\widehat S-S)
=
\frac{2}{z\Delta_2}
\Bigl[
\sqrt n(U_1-\Delta_1)
-\frac{\Delta_1}{\Delta_2}\sqrt n(U_2-\Delta_2)
\Bigr]
+o_p(1).
$$

By the joint central limit theorem for U-statistics,
$$
\sqrt n
\begin{pmatrix}
U_1-\Delta_1\\
U_2-\Delta_2
\end{pmatrix}
\xrightarrow{d}
N\!\left(
\begin{pmatrix}0\\0\end{pmatrix},
4
\begin{pmatrix}
\sigma_1^2 & \sigma_{12}\\
\sigma_{12} & \sigma_2^2
\end{pmatrix}
\right),
$$
where $\sigma_1^2$, $\sigma_2^2$, and $\sigma_{12}$ are determined by the
first-order Hoeffding projections.

For $U_1$, the projection is
$$h_1(X)
=
\mathbb E\{\psi_1(X,X_2)\mid X_1=X\}-\Delta_1
=
X(F(X)-F(z))-\int_0^X y\,dF(y).$$
Therefore, we have
$$\sigma_1^2=Var\left(
X(F(X)-F(z))-\int_0^X y\,dF(y)
\right).$$
For $U_2$, since $\psi_2(x_1,x_2)=\tfrac12\{I(x_1\le z)+I(x_2\le z)\}$,
$$
h_2(X)
=
\mathbb E\{\psi_2(X,X_2)\mid X_1=X\}-\Delta_2
=
\tfrac12\{I(X\le z)-F(z)\},
$$
which gives
$$\sigma_2^2
=
Var(h_2(X))
=
\frac14 F(z)\{1-F(z)\}.$$
The covariance term can be found using the first order projections.
Since
$$h_2(X)=\tfrac12\{I(X\le z)-F(z)\},$$we have
$$
\sigma_{12}
=
Cov(h_1(X),h_2(X))
=
\frac12\,Cov\!\bigl(h_1(X),\,I(X\le z)-F(z)\bigr).
$$
Because $\mathbb E\{h_1(X)\}=0$ and $F(z)$ is constant, this reduces to
$$
\sigma_{12}
=
\frac12\,Cov\!\left(
X(F(X)-F(z))-\int_0^X y\,dF(y),
\, I(X\le z)
\right).
$$
Denote
$$
A=X(F(X)-F(z))-\int_0^X y\,dF(y),
\qquad
B=I(X\le z).
$$
Using the definition of covariance, we have
$$
\sigma_{12}
=
\frac12\,Cov(A,B)
=
\frac12\Bigl[
\mathbb E\{A\,I(X\le z)\}
-
\mathbb E(A)\mathbb E\{I(X\le z)\}
\Bigr].
$$
Since $\mathbb E(A)=0$ and $\mathbb E\{I(X\le z)\}=F(z)$, this reduces to
$$
\sigma_{12}
=
\frac12\,\mathbb E\!\left[
\Bigl\{X(F(X)-F(z))-\int_0^X y\,dF(y)\Bigr\}
I(X\le z)
\right].
$$
Equivalently,
$$
\sigma_{12}
=
\frac12
\int_0^z
\left[
x\{F(x)-F(z)\}
-
\int_0^x y\,dF(y)
\right]
\,dF(x).
$$
Hence, we have the asymptotic variance, as stated in the theorem. \\

 \noindent {\bf  Proof of Theorem 4:} Define the centred estimating function
$$
g_i = K(X_i,S_0),
\qquad i=1,\dots,n.
$$
By construction, $E(g_i)=0$ and $E(g_i^2)<\infty$ by assumption.
Let$$
\bar{g}_n = \frac{1}{n}\sum_{i=1}^n g_i.
$$
A Taylor expansion of the Lagrange multiplier equation yields
\begin{equation}\label{lambda1}
  \lambda_1=\frac{\bar{g}_n}{E(g_i^2)} + o_p(n^{-1/2}).
\end{equation}Using this expansion,
\begin{eqnarray*}
    \log L(S_0)&=&-\sum_{i=1}^{n} \log\left(1+\lambda_1 g_i\right)\\
    &=&-\lambda_1 \sum_{i=1}^{n} g_i + \frac{1}{2}\lambda_1^2 \sum_{i=1}^n g_i^2 + o_p(1).
\end{eqnarray*}
Since $\sum_{i=1}^n g_i = n\bar{g}_n$, substituting the expression for  $\lambda_1$ in (\ref{lambda1}), from above equation, we obtain
$$
-2\log L(S_0)=\frac{n\bar{g}_n^2}{E(g_i^2)} + o_p(1).
$$
By the central limit theorem,
$$
\sqrt{n}\,\bar{g}_n
\;\xrightarrow{d}\;
N(0,\,E(g_i^2)).
$$
Hence, we have,
$$
\frac{n\bar{g}_n^2}{E(g_i^2)}
\,\xrightarrow{d}\,
\chi^2_1.
$$
Therefore $$-2\log L(S_0)
\;\xrightarrow{d}\;
\chi^2_1,$$
which completes the proof.

\noindent{\bf Proof of Theorem 5 :}
Using the central limit theorem of U-statistics (Hoeffding, 1948), as $n\rightarrow \infty$, we obtain
\begin{equation}\label{cltsenjel}
  \frac{\sqrt{n}\widehat{S}_n  }{2\sigma_{g}}\xrightarrow[]{d}N(0,1).
\end{equation}
\noindent The proof for the above result is similar to that of Theorem 3.

We first prove the elementary identity
$$
\frac{1}{n}\sum_{k=1}^n \widehat V_k = \widehat S_n,
$$
where
$$
\widehat S_n=\frac{1}{\binom{n}{2}}\sum_{1\le i<j\le n}\psi(X_i,X_j;\widehat S)
=\frac{2}{n(n-1)}\sum_{1\le i<j\le n}\psi_{ij},
$$
and the jackknife pseudo-values are defined by
$$
\widehat V_k = n\widehat S_n-(n-1)S_{n-1,k},\qquad k=1,\dots,n,
$$
with $S_{n-1,k}$ the U–statistic calculated  from the sample with the $k$-th observation deleted.
By the definition of the leave-one-out statistic,
\begin{eqnarray*}
    S_{n-1,k}
&=&\frac{1}{\binom{n-1}{2}}
\sum_{\substack{1\le i<j\le n\\ i,j\neq k}}\psi_{ij}\\
&=&\frac{2}{(n-1)(n-2)}
\sum_{\substack{1\le i<j\le n\\ i,j\neq k}}\psi_{ij}.
\end{eqnarray*}
Summing $S_{n-1,k}$ over $k=1,\dots,n$ gives
$$
\sum_{k=1}^n S_{n-1,k}
=\frac{2}{(n-1)(n-2)}
\sum_{k=1}^n
\sum_{\substack{1\le i<j\le n\\ i,j\neq k}}\psi_{ij}.
$$
For each unordered pair $(i,j)$ with $ 1\le i<j\le n$, the inner summand $\psi_{ij}$ is present in the inner sum for every $k$ except $k=i$ and $k=j$; hence it appears exactly $n-2$ times when summing over $k$. Therefore
$$
\sum_{k=1}^n
\sum_{\substack{1\le i<j\le n\\ i,j\neq k}}\psi_{ij}
=(n-2)\sum_{1\le i<j\le n}\psi_{ij},
$$
and consequently
$$\sum_{k=1}^n S_{n-1,k}
=\frac{2}{(n-1)(n-2)}(n-2)\sum_{1\le i<j\le n}\psi_{ij}
=\frac{2}{n-1}\sum_{1\le i<j\le n}\psi_{ij}.$$
Comparing with the definition of $\widehat S_n$ yields
$$\sum_{k=1}^n S_{n-1,k}=n\widehat S_n.$$
Now summing the  pseudo-values, we obtain
$$\sum_{k=1}^n \widehat V_k
=\sum_{k=1}^n\big(n\widehat S_n-(n-1)S_{n-1,k}\big)
= n^2\widehat S_n-(n-1)\sum_{k=1}^n S_{n-1,k}
= n^2\widehat S_n-(n-1)\,n\widehat S_n
= n\widehat S_n.$$
Dividing by $n$ gives the required identity.

Next, we present the standard asymptotic expansion of the JEL log-likelihood ratio.
Write $v_k=\widehat V_k$ and let
$$\overline v=\frac{1}{n}\sum_{k=1}^n v_k=\widehat S_n,\qquad
S_v^2=\frac{1}{n}\sum_{k=1}^n v_k^2.$$Denote by \(J(S)\) the JEL statistic (the empirical log-likelihood ratio formed from pseudo-values) and let \(\lambda_2\) be the Lagrange multiplier that solves
$$\frac{1}{n}\sum_{k=1}^n \frac{v_k}{1+\lambda_2 v_k}=0.$$
\noindent Assume
\begin{enumerate}
  \item $E\{\psi(X_1,X_2;\widehat S)^2\}<\infty$ and the usual regularity conditions for U–statistics hold;
  \item $\max_{1\le k\le n}|v_k|=o_p(\sqrt n)$ (cf.\ Lemma A.4, Jing et al.\ (2009)).
\end{enumerate}
Then
$$
\lambda_2=\frac{\overline v}{S_v^2}+o_p(n^{-1/2}),
$$
and
$$
J(S)=2n\lambda_2\overline v-nS_v^2\lambda_2^2+R_n
=\frac{n\overline v^{\,2}}{S_v^2}+o_p(1),
$$
where the remainder $R_n=o_p(1)$.
Expand the constraint equation around $\lambda_2=0$ gives us
$$
0
=\frac{1}{n}\sum_{k=1}^n\frac{v_k}{1+\lambda_2 v_k}
=\overline v - \lambda_2 \frac{1}{n}\sum_{k=1}^n v_k^2
+ \frac{1}{n}\sum_{k=1}^n v_k^3 \frac{\lambda_2^2}{1+\lambda_2 v_k}.
$$
Rearranging and substituting the expression for $S_v^2$, we obtain
$$
\lambda_2 \, S_v^2
= \overline v + \frac{1}{n}\sum_{k=1}^n v_k^3 \frac{\lambda_2^2}{1+\lambda_2 v_k}.
$$
By the boundedness condition $\max_k|v_k|=o_p(\sqrt n)$ and using the bound $S_v^2=O_p(1)$, the cubic remainder term is of order $o_p(n^{-1/2})$ (this follows from the fact that $n^{-1}\sum |v_k|^3 = o_p(\sqrt n)$, together with $|\lambda_2|=O_p(n^{-1/2})$. Hence
\begin{equation}\label{lambda2new}
    \lambda_2 = \frac{\overline v}{S_v^2} + o_p(n^{-1/2}).
\end{equation}
Next, expand the empirical log-likelihood ratio. That is, consider the Taylor expansion of $-2\sum\log(1+\lambda_2 v_k)$ given  by
$$J(S)
=2\sum_{k=1}^n\Big(\lambda_2 v_k - \tfrac{1}{2}\lambda_2^2 v_k^2 + \tfrac{1}{3}\lambda_2^3 v_k^3 \frac{1}{1+\theta_k\lambda_2 v_k}\Big)
=2n\lambda_2 \overline v - n S_v^2\lambda_2^2 + R_n,$$
where the remainder $R_n$ is $o_p(1)$ by the same bounds used above. Substituting the expansion for $\lambda_2$ given in (\ref{lambda2new}), above expression yields
$$J(S)=\frac{n\overline v^{\,2}}{S_v^2}+o_p(1).$$
Finally, by the central limit theorem for jackknife pseudo-values (or equivalently by the CLT for U–statistics and the delta method),
$$
\sqrt n\,\overline v \xrightarrow{d} N(0,\sigma_g^2),
$$
where
$\sigma_g^2=\lim_{n\to\infty}S_v^2$. Combining this with the expansion above and Slutsky's theorem yields
$$
J(S)\;=\;\frac{n\overline v^{\,2}}{S_v^2}+o_p(1)
\;\xrightarrow{d}\;
\chi^2_1,
$$
establishing the required limit for JEL ratio statistics.
\qed\\


\noindent {\bf  Proof Theorem 6:}
By the central limit theorem for U--statistics (see Lee, 1990), the statistic
$$
\widehat S_h=\frac{1}{\binom{n}{2}}\sum_{1\le i<j\le n} h(X_i,X_j)
$$
satisfies
$$
\sqrt{n}\,(\widehat S_h - S_h)\;\xrightarrow{d}\; N(0,4\sigma_2^2),
$$
where the asymptotic variance component $\sigma_2^2$ is the variance of the first-order projection
$$
\sigma_2^2=Var\left(\,E\big[h(X_1,X_2)\mid X_1\big]\,\right).
$$
In our case, the symmetric kernel is
$$h(X_1,X_2)
= I\!\left(\min(X_1,X_2)\le z\right)
- \frac{1}{z}\min(X_1,X_2)\,I\!\left(\min(X_1,X_2)\le z\right).$$
Define the indicator
$$
I_Z = I\!\left(\min(X_1,X_2)\le z\right).
$$
We compute the conditional expectation term-by-term.
For fixed $X_1=x$,
$$
E(I_Z\mid X_1=x)
=
P\!\left(\min(x,X_2)\le z\right)
=
\begin{cases}
1, & x\le z,\\[4pt]
F(z), & x>z.
\end{cases}
$$
Similarly,
$$
E\!\left(\min(X_1,X_2) I_Z \mid X_1=x\right)
=
\begin{cases}
x\,(1-F(x))+\displaystyle\int_0^{x} y\,dF(y), & x\le z,\\[10pt]
\displaystyle\int_0^{z} y\,dF(y), & x>z.
\end{cases}
$$
Combining the above two  expressions, we obtain
\begin{eqnarray*}
    E\!\left(h(X_1,X_2)\mid X_1=x\right)&=&
\left(F(z)-\frac{1}{z}\int_0^z y\,dF(y)\right) I(x>z)
\\
&&+\left(1-\frac{1}{z}\left(x(1-F(x))+\int_0^{x} y\,dF(y)\right)\right) I(x\le z).
\end{eqnarray*}Therefore
\begin{eqnarray*}
    \sigma_2^2
&=&Var\Bigg[
\left(F(z)-\frac{1}{z}\int_0^z y\,dF(y)\right) I(X>z)
\\&&+\left(1-\frac{1}{z}\left(X(1-F(X))+\int_0^{X} y\,dF(y)\right)\right) I(X\le z)
\Bigg].
\end{eqnarray*}
This completes the computation of the asymptotic variance $4\sigma_2^2$.


\begin{thebibliography}{222}
 \bibitem{} Bishop, J. A., Formby, J. P., and Zheng, B. (1997). Statistical inference and the Sen index of poverty. \emph{International Economic Review}, 38, 381--387.

\bibitem{} Clark, S., Hemming, R., and Ulph, D. (1981). On indices for the measurement of poverty. \emph{The Economic Journal}, 91(362), 515--526.

\bibitem{} Centre for Monitoring Indian Economy (CMIE). (2023). Consumer Pyramids Household Survey (CPHS). Mumbai: CMIE.

\bibitem{davidson2009reliable} Davidson, R. (2009). Reliable inference for the Gini index. \emph{Journal of Econometrics}, 150(1), 30--40.

\bibitem{} Foster, J., Greer, J., and Thorbecke, E. (1984). A class of decomposable poverty measures. \emph{Econometrica}, 52(3), 761--766.

\bibitem{} Hoeffding, W. (1948). A class of statistics with asymptotically normal distributions. \emph{The Annals of Mathematical Statistics}, 19, 293--325.

\bibitem{jing2009jackknife} Jing, B. Y., Yuan, J., and Zhou, W. (2009). Jackknife empirical likelihood. \emph{Journal of the American Statistical Association}, 104(487), 1224--1232.

\bibitem{} Kakwani, N. C. (1980). \emph{Income Inequality and Poverty}. New York: Oxford University Press (for the World Bank).

\bibitem{LAJ} Lee, A. J. (2019). \emph{U-Statistics: Theory and Practice}. New York: Marcel Dekker.

\bibitem{lehmann1951consistency} Lehmann, E. L. (1951). Consistency and unbiasedness of certain nonparametric tests. \emph{The Annals of Mathematical Statistics}, 22(2), 165--179.

\bibitem{} NITI Aayog. (2023). \emph{National Multidimensional Poverty Index: A Progress Review 2023}. Government of India.

\bibitem{owen1988empirical} Owen, A. B. (1988). Empirical likelihood ratio confidence intervals for a single functional. \emph{Biometrika}, 75(2), 237--249.

\bibitem{owen1990empirical} Owen, A. B. (1990). Empirical likelihood ratio confidence regions. \emph{The Annals of Statistics}, 18(1), 90--120.

\bibitem{} Panel Study of Income Dynamics. (2023). Panel Study of Income Dynamics (PSID): Main interview, 1968--2023 [Data set]. Institute for Social Research, University of Michigan. \url{https://psidonline.isr.umich.edu}

\bibitem{} Peng, L. (2011). Empirical likelihood methods for the Gini index. \emph{Australian \& New Zealand Journal of Statistics}, 53(2), 131--139.

\bibitem{} Ratnasingam, S., Wallace, S., Amani, I., and Romero, J. (2024). Nonparametric confidence intervals for generalized Lorenz curve using modified empirical likelihood. \emph{Computational Statistics}, 39(6), 3073--3090.

\bibitem{QY2013} Qin, G., Yang, B., and Hall, N. E. B. (2013). Empirical likelihood-based inferences for Lorenz curve. \emph{Annals of the Institute of Statistical Mathematics}, 65, 1--21.


\bibitem{} SBI Research. (2025). Consumption expenditure survey reveals a remarkable decline in rural poverty—Estimated rural poverty at 4.86\% and urban poverty at 4.09\% for FY24 [Press release]. \url{https://www.sbi.co.in/webfiles/uploads/files_2425/HCES_SBI%20Report_Jan25.pdf}

\bibitem{} Sen, A. (1976). Poverty: An ordinal approach to measurement. \emph{Econometrica}, 44, 219--231.



\bibitem{} Shorrocks, A. F. (1995). Revisiting the Sen poverty index. \emph{Econometrica}, 63(5), 1225--1230.

\bibitem{} Stigler, S. M. (1974). Linear functions of order statistics with smooth weight functions. \emph{The Annals of Statistics}, 2, 676--693.

\bibitem{tha} Takayama, N. (1979). Poverty, income inequality, and their measures: Professor Sen's axiomatic approach reconsidered. \emph{Econometrica}, 47(3), 747--759.

\bibitem{} Thomas, D. R., and Grunkemeier, G. L. (1975). Confidence interval estimation of survival probabilities for censored data. \emph{Journal of the American Statistical Association}, 70(352), 865--871.

\bibitem{} Thon, D. (1979). On measuring poverty. \emph{Review of Income and Wealth}, 25(4), 429--439.

\bibitem{} Thon, D. (1983). Lorenz curves and Lorenz coefficients: A sceptical note. \emph{Weltwirtschaftliches Archiv}, 119(2), 364--367.

\bibitem{} U.S. Census Bureau. (2023). \emph{Poverty Thresholds}.

\bibitem{} Valletta, R. G., and Yilma, M. (2024). Enhanced unemployment insurance benefits in the United States during COVID-19: Equity and efficiency. \emph{Federal Reserve Bank of San Francisco Working Paper 2024--15}.



\bibitem{} Watts, H. (1968). An economic definition of poverty. In D. P. Moynihan (Ed.), \emph{On Understanding Poverty} (pp. 316--329). New York: Basic Books.

\bibitem{xu2000} Xu, K. (1998). Statistical inference for the Sen–Shorrocks–Thon index of poverty intensity. \emph{Journal of Income Distribution}, 8(1), 143--152.



%
%
\end{thebibliography}

\end{document}